\def\F{\mathbf F}
\def\S{\mathbf S}
\def\deg{\mbox{\rm deg}}
\def\Im{\mbox{\rm Im}}
\def\dim{\mbox{\rm dim}}
\def\FF{{\mathbb F}}
\def\f2{{\mathbb F}_{2}}
\def\fm{{\mathbb F}_{2^m}}
\newcommand{\gd}{\delta}
\newcommand{\gl}{\lambda}
\newcommand{\go}{\omega}
\def\Im{\mathrm{Im}}
\begin{document}

\title{On weak
differential uniformity of vectorial Boolean functions as a cryptographic criterion
}


\author{Riccardo Aragona         \and
        Marco Calderini  \and
        Daniele Maccauro \and
        Massimiliano Sala
}

\institute{F. Author \at
              first address \\
              Tel.: +123-45-678910\\
              Fax: +123-45-678910\\
              \email{fauthor@example.com}           
           \and
           S. Author \at
              second address
}

\institute{R. Aragona \at
              \email{ric.aragona@gmail.com}           
           \and
M. Calderini \at
              \email{marco.calderini@unitn.it}           
           \and
D. Maccauro \at
              \email{daniele.maccauro@gmail.com}           
           \and
M. Sala \at
              \email{maxsalacodes@gmail.com}           
}

\date{Received: date / Accepted: date}

\maketitle

\begin{abstract}
We study the relation among some security parameters for vectorial Boolean functions which prevent attacks on the related block cipher. 
We focus our study on a recently-introduced security criterion, called weak differential uniformity, which prevents the existence of an undetectable trapdoor based 
on imprimitive group action. We present some properties of functions with low weak differential uniformity, especially for the case
of power functions and 4-bit S-Boxes.

\keywords{Permutation \and vectorial Boolean functions \and Power functions \and Weak differential uniformity}
 \subclass{94A60 \and 06E30 \and 20B40}
\end{abstract}

\section{Introduction}
\label{intro}

Differential and linear attacks are major cryptanalytic tools which apply to most cryptographic algorithms. Therefore, functions which guarantee a high resistance to these attacks have been extensively studied. In particular, those with low differential uniformity and high non-linearity, e.g. Almost Perfect Nonlinear (APN) functions or Almost Bent (AB) functions, have received a lot of attention. Since in the design of a block cipher an invertible S-Box of even dimension is usually needed, there is strong interest in non-linear permutations. However, we know examples of APN permutations in even dimension only for dimension 6, for more details see \cite{apn}.
For the highly interesting cases of dimension $4$ and $8$, the cipher designer will certainly use  $4$-differentially uniform S-Boxes, but she will
also look at other security criteria, if applicable, although it is not obvious which. Besides, even a $4$-differentially uniform S-Box can hide
a trapdoor in the related cipher, if not carefully chosen. Algebraic trapdoors can be very dangerous, especially when they are undetectable \cite{preenel}.

We are investigating the  security criterion introduced recently in  \cite{CVS}, 
called weak differential uniformity. As shown in \cite{CVS}, any cipher (with a prescribed structure) possessing a weakly-APN vectorial Boolean function
as S-Box cannot be successfully attacked using a trapdoor based on imprimitive group action. Indeed, ciphers suffering from such
a trapdoor have been built in \cite{CGC-cry-art-paterson1} and might be used as standards without anyone realizing the trapdoor existence.
So, a designer would have advantage in choosing an invertible S-Box which is simultaneously  weakly APN and $4$-differentially uniform,
which exists for dimension $4$ and $8$ (and actually for any dimension).
Results in \cite{CVS} are generalized on any field in \cite{ACVS}, where again the notion of weakly APN plays an important security role.

In Section 2 we recall the attack \cite{CGC-cry-art-paterson1} that can be mounted on an AES-like cipher when an imprimitive group action
is present. We recall also how a weakly APN S-Box would make this attack ineffective \cite{CVS}, motivating thus this security criterion.
In Section 3 we present some first results on weak differential uniformity. In Section 4 we specialize to the case of monomial functions,
where we see an interesting connection with the property of having the image of a function derivative as an affine space,
which is an unexpected weakness within the underlying algorithms (see for instance \cite{crooked,onan}).
In Section 5 we relate the weak differential uniformity with other algebraic properties of vectorial Boolean functions, in particular with
the degrees of both the function components and the function derivative components.  We can thus improve some results given in \cite{onwAPN} and 
give a formal proof of Fact 4 in \cite{onwAPN}.
In Section 6, we give some results about the partially bent (quadratic), components of a weakly APN permutation and we note that in even dimension weakly APN functions cannot be partially bent (quadratic), behaving thus as APN functions \cite{Ny,SZZ}. In Section 7 we give some other properties of vectorial Boolean functions whose derivatives have no constant components, allowing also a deeper understanding of $4$-bit S-Boxes.

\section{Cryptographic motivations for studying weak differential uniformity}
\label{sec:0}
Most block ciphers used for real-life applications are \emph{iterated block ciphers}, i.e. obtained by a composition of several key-dependent permutations of the message space called ``round functions''. Let  $\mathcal{C}$ be a block cipher, i.e.  a  set of permutations $\{\varphi_k\}_{k\in\mathcal{K}}$  of the message space $V$, where  $\mathcal{K}$ is the key space. An interesting problem is determining the properties of the permutation group $\Gamma_{\infty}(\mathcal{C})=\Gamma_{\infty}$ generated by the round functions of $\mathcal{C}$  that imply  weaknesses of the cipher. 

A property of $\Gamma_{\infty}$ considered undesirable is the imprimitivity. Paterson \cite{CGC-cry-art-paterson1} showed that if this group is imprimitive, then it is possible to embed a trapdoor in the cipher. On the other hand, if the group is primitive no such trapdoor can be inserted. We give the idea of the basic chosen-plaintext attack by Paterson. First we recall  what {it }is an imprimitive group. Let $G$  be a finite group acting  transitively on a set  $V$. We will
write the action of $g\in G$ on an element $v\in V$ as $v g$.
A   \textit{partition}   $\mathcal{B}$   of   $V$  is   said   to   be
$G$-\textit{invariant}     if     $B g\in\mathcal{B}$,    for     every
$B\in\mathcal{B}$   and  $g\in  G$.   A  partition   $\mathcal{B}$  is
\textit{trivial} if  $\mathcal{B}=\{V\}$ or $\mathcal{B}=\{\{v\}\, |\,
v\in V\}$. A non-trivial  $G$-invariant partition $\mathcal{B}$ of $V$
is called a  \textit{block system} for the action of  $G$ on $V$. Each
$B\in\mathcal{B}$  is  called a  \textit{block  of imprimitivity}.  $G$
is called  \textit{imprimitive} in its  action on  $V$ if  it admits  a block
system, otherwise  it is called \textit{primitive}.  Now we suppose that $\Gamma_\infty$ is imprimitive. Let us fix any $k\in\mathcal{K}$ and let $\varphi_k\in\Gamma_\infty$ be the related encryption function. Let $B_1,\dots,B_r$ be a  non-trivial block system for the group $\Gamma_\infty$. This attack works only if we know an efficient algorithm (\emph{block sieving}) with input any vector $v\in V$ and output the (unique) block $B_l$ containing $v$. The classical case is when the block system is formed by all the cosets of a known vector subspace of $V$. Paterson gives this trapdoor for a DES-like cipher (for more details on DES see \cite{specDES}), but it can be extended to the case of AES-like ciphers. We now describe the attack.\\
~\\
~\\
\texttt{Preprocessing performed ones per key}\\ 
We choose one plaintext $m_i$ in each set $B_i$, obtaining the corresponding ciphertext $c_i$. Then the effect of $\varphi_k$ on each block $B_i$ is determined, 
$$
c_i=m_i\varphi_k \in B_j \Rightarrow B_i \varphi_k=B_j.
$$
\texttt{Real-time processing}\\
Given any  ciphertext $c$, we can compute $l$ such that $c\in B_l$ via the block sieving. Then, we can find the plaintext $m$ of $c$ by examining the block $B_l\varphi_k^{-1}$.\\
\texttt{Attack cost}\\
The preprocessing costs $r$ encryptions. For any intercepted ciphertext, the search for the corresponding plaintext is limited to a block, whose size is  $\frac{|V|}{r}$, requiring at most $\frac{|V|}{r}$ encryptions.
~\\

Moreover, a cipher $\mathcal{C}$ may have another weakness if $\Gamma_{\infty}(\mathcal{C})$ is of small size, since not every possible permutation of the message space can be realized by the cipher  \cite{CG,EG83}. Attacks on ciphers whose encryptions generate small groups were given in \cite{KJRS88}. 

In \cite{CVS} the authors define a class of iterated block ciphers, called translation based ciphers (\cite{CVS}, Definition 3.1), large enough to include many common ciphers (as AES \cite{specAES}, SERPENT \cite{specSERPENT} and PRESENT \cite{specPRESENT}), and provide some conditions on the S-Boxes of these ciphers that guarantee the primitivity of $\Gamma_{\infty}$. In particular,  in Theorem 4.4 of \cite{CVS}, it is proved that if $\mathcal{C}$ is a translation based cipher such that any S-Box satisfies, for some integer $r$,
\begin{itemize} 
\item the weak $2^r$-differential uniformity (see Definition \ref{def:weakly} in the next section), and
\item the strongly $r$-anti-invariance (it means that no S-Box of $\mathcal{C}$ sends a proper subspace of codimension $l$ of the plaintext space  to another proper subspace of codimension $l$),
\end{itemize}
then $\Gamma_{\infty}(\mathcal{C})$ is primitive.

\noindent In  Theorem 2 of \cite{onan} ,  under the same hypotheses plus an additional cryptographic assumption, i.e. none of the
images of the derivatives of any S-Box is a coset of a linear subspace of the
message space, it is proved that $\Gamma_{\infty}$ is the alternating group, and so, in other words, $\Gamma_\infty$ is huge.

Starting from these cryptographic motivations, in \cite{onwAPN} the authors provide a deep analysis of 4-bit vectorial Boolean functions focusing on the weak differential uniformity. They determine several conditions, computational and  theoretical, which are either sufficient or necessary for a 4-bit vectorial Boolean function to be weakly $2$-differential uniform (weakly APN). Moreover they consider two non-linearity measures, $\hat{n}(f)$ and $n_i(f)$ where $f$ is a vectorial Boolean function (see  Section \ref{sec:degree}), and they give some relations between such measures and the weakly APNness.

If the image of a derivative of an S-Box of a cipher $\mathcal{C}$ is an affine space then this can be another weakness of $\mathcal{C}$. For example, in the yet unpublished PhD Thesis \cite{phd} the author shows how this condition could induce a weakness based on the action of an alternative operation, called \emph{hidden sum}, for which the  vector space structure of the message space is preserved. In \cite{calsala} some differential properties for permutations in the affine group of the message space  with respect to a hidden sum are investigated. In \cite{ACVS} the authors present conditions on the S-box able to prevent a type of trapdoors based on this alternative operation. One of these conditions is  that the derivatives of the S-box do not map the space to an affine subspace. Moreover, also for the hash function case, in \cite{crooked} the authors show an attack on a SHA-3 candidate (Maraca) \cite{specMARACA}, which is especially effective when the associated vectorial Boolean function has this feature. In Section \ref{sec:powfun} we will show a sufficient condition for monomial weakly APN vectorial Boolean functions in order to have that none of their derivatives sends the message space to a proper affine subspace (Corollary \ref{cor:monomial}).

\section{Weak differential uniformity}\label{sec:1}

Let $\FF=\f2$. Let $m\ge 1$, any vectorial Boolean function (vBf) $f$ from $\FF^m$ to $\FF^m$ can be expressed uniquely as a univariate polynomial in $\fm[x]$. Any time we write that $f$ is a vBf, we will implicit mean $f:\FF^m\to\FF^m$. When $f$ is also invertible we call it a vBf permutation. We denote the \emph{derivative} of $f$ by $\hat{f}_a(x)=f(x+a)+f(x)$, where $a\in\FF^m\setminus\{0\}$, and the \emph{image} of $f$ by $\Im(f) = \{f(x) \mid x \in\FF^m\}$.
\\

A notion of non-linearity for S-Boxes  that has attracted a lot of research.

\begin{definition}
Let $m\ge 1$. Let $f$ be a vBf, for any $a,b\in \FF^m$ we define
$$
\gd_f(a,b)=|\{x\in\FF^m \mid \hat{f}_a(x)=b\}|.
$$
The \emph{differential uniformity} of $f$ is 
$$
\gd(f)=\max_{a,b\in \FF^m\\
a\neq 0}\gd_f(a,b).
$$
$f$ is said $\gd$-\emph{differentially uniform} if $\gd=\gd(f)$.\\
Those functions  with $\gd(f)=2$ are said \emph{Almost Perfect Nonlinear (APN)}.
\end{definition}


There is a  generalization of differential uniformity presented recently in \cite{CVS}, which we recall in the following definition.

\begin{definition}\label{def:weakly}
Let $f$ be a vBf. $f$ is \emph{weakly} $\gd$-\emph{differential uniform} if
$$
|\Im(\hat{f}_a)|>\frac{2^{m-1}}{\gd}, \quad \forall\, a \in\FF^m\setminus\{0\}.
$$
If $f$ is weakly $2$-differential uniform, it is said \emph{weakly APN}.
\end{definition}

As shown in \cite{CVS}, a $\gd$-differentially uniform map is weakly $\gd$-differential uniform. Moreover the following result holds
\begin{lemma}
The weak $\gd$-differential uniformity is affine-invariant.
\end{lemma}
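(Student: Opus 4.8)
The plan is to show that the defining inequality of Definition \ref{def:weakly} depends on $f$ only through the collection of cardinalities $\{\,|\Im(\hat f_a)| : a\in\FF^m\setminus\{0\}\,\}$, and that this collection is unchanged when $f$ is replaced by an affine-equivalent function $g=A_1\circ f\circ A_2$, where $A_1,A_2$ are affine permutations of $\FF^m$. Write $A_2(x)=L_2(x)+c_2$ and $A_1(y)=L_1(y)+c_1$, with $L_1,L_2\in\GL(m,\f2)$ linear permutations and $c_1,c_2\in\FF^m$.

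First I would compute the derivative of $g$ directly. Fix $a\in\FF^m\setminus\{0\}$. Using the linearity of $L_2$ and the fact that we work in characteristic two (so that the constant $c_1$ appears twice and cancels), one obtains
\begin{equation*}
\hat g_a(x)=g(x+a)+g(x)=L_1\bigl(f(L_2(x)+c_2+L_2(a))\bigr)+L_1\bigl(f(L_2(x)+c_2)\bigr).
\end{equation*}
Since $L_1$ is additive, the right-hand side equals $L_1\bigl(\hat f_{a'}(u)\bigr)$, where $u:=L_2(x)+c_2$ and $a':=L_2(a)$. This is the key identity: the outer translation by $c_1$ has disappeared, the inner affine map has been absorbed into the substitution $x\mapsto u$, and the linear part $L_2$ has transformed the direction $a$ into $a'$.

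Next I would read off the image. As $x$ runs over $\FF^m$, the variable $u=L_2(x)+c_2$ runs bijectively over $\FF^m$, so $\{\hat f_{a'}(u):u\in\FF^m\}=\Im(\hat f_{a'})$; that is, composing with the inner affine bijection does not alter the image set. Applying $L_1$, which is a bijection, then gives $\Im(\hat g_a)=L_1\bigl(\Im(\hat f_{a'})\bigr)$, and hence $|\Im(\hat g_a)|=|\Im(\hat f_{a'})|$. Finally, because $L_2$ is a linear permutation it fixes $0$ and permutes $\FF^m\setminus\{0\}$, so as $a$ ranges over the nonzero vectors $a'=L_2(a)$ ranges over exactly the same set. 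Therefore the inequality $|\Im(\hat f_a)|>2^{m-1}/\gd$ holds for every nonzero $a$ if and only if $|\Im(\hat g_a)|>2^{m-1}/\gd$ holds for every nonzero $a$, which is precisely the claim.

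I expect the only delicate point to be the bookkeeping in the key identity: one must verify carefully that the output translation $c_1$ cancels (this is where characteristic two is used) and that the input translation $c_2$ is harmless because it is swallowed by the bijective reparametrization $x\mapsto u$ and leaves $\Im(\hat f_{a'})$ untouched. The remaining reductions are immediate consequences of $L_1$ and $L_2$ being bijections.
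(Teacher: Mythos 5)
Your proposal is correct and follows essentially the same route as the paper's proof: both compute $\hat g_a(x)$ directly, observe that the output translation cancels in characteristic two, absorb the input affine map into a bijective reparametrization, and conclude $\Im(\hat g_a)=L_1\bigl(\Im(\hat f_{L_2(a)})\bigr)$ so that the cardinalities (and the quantification over nonzero $a$) are preserved. Your write-up is slightly more explicit than the paper's about the cancellation of the constant and about $L_2$ permuting $\FF^m\setminus\{0\}$, but the underlying argument is identical.
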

\begin{proof}
If $f$ is weakly $\gd$-differential uniform and  $g(x)=D(f(Cx+c))+d$, for some  $m\times m$ matrices $C$ and $D$ with coefficients in $\FF^m$ and for some $c,d\in\FF^m$, then we have
$$
\begin{aligned}
\hat{g}_a(x)&=D (f(C(x+a)+c))+d+D (f(Cx+c))+d\\
&=D (f(Cx+Ca+c))+D (f(Cx+c))\\
&=D (\hat{f}_{Ca}(Cx+c)),
\end{aligned}
$$
for any $a\in\FF^m$, and so $\Im(\hat{g}_a)=D(\Im(\hat{f}_{Ca}))$.\\ 
Since $C$ and $D$ are permutations, we have that weak $\gd$-differential uniformity is affine-invariant. 
\end{proof}

\begin{remark}
In \cite{bcc} another non-linearity notion, \emph{the locally almost perfect nonlinearity (locally APN)}, is introduced. Note that, in general, the local-APN property is not equivalent to the weak-APN property. For example, the monomial function $x^{11}$ defined over $\FF^6$ is weakly APN but it is not locally APN. However, for any dimension there exist  Boolean functions that are both locally APN and weakly APN, e.g. the patched inversion.
\end{remark}

\begin{remark}\label{rem:2.7}
Suppose that $f$ is not a monomial function. If $f$ is weakly $\gd$-differential uniform then $f^{-1}$ is not necessarily weakly $\gd$-differential uniform. We provide the following example. Let $f:\FF^{4} \to \FF^{4}$ be 
{\small
$$
\begin{aligned}
f(x)&=x^{14}+ e^{10}x^{13}+ex^{12}+e^2x^{11}+e^{9}x^{10}+e^8x^9+e^3x^8+e^5x^7\\
&+e^5x^6+e^{11}x^5+e^8x^3+e^{10}x^2+ex+e^{12},
\end{aligned}
$$
}
where $e$ is a primitive element of $\mathbb{F}_{16}$ such that $e^4=e+1$. The inverse of $f$ is
{\small
$$
\begin{aligned}
f^{-1}(x)&=x^{14}+ e^{10}x^{13}+e^{14}x^{12}+e^8x^{11}+e^{7}x^{10}+e^{10}x^9+x^8+e^5x^7+e^{14}x^6\\
&+e^{2}x^5+e^7x^4+e^5x^3+e^{14}x^2+e^{11}x+e^{14}.
\end{aligned}
$$
}
We have that $f$ is weakly APN while $f^{-1}$ is only weakly 4-differential uniform. 

\end{remark}

We recall that two vBf's $f$ and $g$ are called CCZ-equivalent (Carlet-Charpin-Zinoviev equivalent) if their graphs $G_f=\{(x,f(x)) \mid x\in\FF^m\}$ and $G_{g}=\{(x,g(x)) \mid x\in\FF^m\}$ are affine equivalent, while  they are called EA-equivalent (Extended Affine equivalent) if there exist three affine functions $\lambda$, $\lambda'$ and $\lambda''$ such that $g=\lambda'\circ f\circ \lambda+\lambda''$.

Remark \ref{rem:2.7} and the fact that a vBf $f$ is CCZ-equivalent to $f^{-1}$ imply the following result.

\begin{proposition}
The weak differential uniformity is not CCZ invariant.
\end{proposition}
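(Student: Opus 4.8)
The plan is to produce a single pair of CCZ-equivalent functions whose weak differential uniformities differ; this will contradict CCZ-invariance. The essential construction has already been supplied in Remark~\ref{rem:2.7}, so the proof reduces to assembling two facts that are now at hand and noting that they are incompatible with invariance.

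First I would invoke the standing fact, recalled immediately before the statement, that every vBf permutation is CCZ-equivalent to its own inverse. For completeness one can see why directly: the graph $G_{f^{-1}}=\{(x,f^{-1}(x))\mid x\in\FF^m\}$ equals $\{(f(y),y)\mid y\in\FF^m\}$, which is exactly the image of $G_f=\{(y,f(y))\mid y\in\FF^m\}$ under the coordinate-swap map $(u,v)\mapsto(v,u)$ on $\FF^m\times\FF^m$. Since this swap is a linear bijection, hence affine, the graphs $G_f$ and $G_{f^{-1}}$ are affine equivalent, so by definition $f$ and $f^{-1}$ are CCZ-equivalent.

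Next I would take $f$ to be precisely the degree-$14$ permutation of $\FF^4$ displayed in Remark~\ref{rem:2.7}, together with the inverse $f^{-1}$ computed there. It is recorded in that remark that $f$ is weakly APN, i.e.\ weakly $2$-differential uniform, whereas $f^{-1}$ is only weakly $4$-differential uniform and in particular fails to be weakly APN. Hence $f$ and $f^{-1}$ are CCZ-equivalent but carry different weak differential uniformity, so this quantity is not preserved by CCZ-equivalence, which is exactly the assertion.

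I do not expect any genuine obstacle in writing this out: the whole difficulty is concentrated in exhibiting the explicit example of Remark~\ref{rem:2.7} and in checking by direct computation that $f$ is weakly APN while $f^{-1}$ is not. Once that verification and the relation $f\sim_{\mathrm{CCZ}}f^{-1}$ are available, the conclusion is immediate, and the only care needed is to state clearly that a \emph{single} counterexample suffices to refute an invariance claim.
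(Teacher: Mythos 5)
Your proposal is correct and follows exactly the paper's argument: the paper likewise deduces the proposition from Remark~2.7 together with the fact that a vBf is CCZ-equivalent to its inverse, applied to the explicit $\FF^4$ permutation that is weakly APN while its inverse is only weakly $4$-differential uniform. Your only addition is spelling out, via the coordinate-swap map on the graph, why $f$ and $f^{-1}$ are CCZ-equivalent, which the paper simply cites as a known fact.
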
 

On the other hand, weak differential uniformity behaves well with respect to EA invariance, as  shown below.

\begin{proposition}
The weak differential uniformity is EA invariant.
\end{proposition}
\begin{proof}
Let $f$ and $g$ be EA equivalent and let $f$ be weakly $\gd$-differential uniform.\\
Then, $g=\lambda'\circ f\circ \lambda+\lambda''=g'+\lambda''$, with $g'$  affine equivalent to $f$ (and  $\gl''$ is an affinity over $\FF^m$).\\
Since weak differential uniformity is affine invariant, we have $|\Im(\hat{g'}_a)|>2^{m-1}/ \gd$ for all $a \in \FF^m\setminus \{0\}$.\\ 
Note that $\Im(\hat{g}_a)=\{b+\gl''(a) \mid b\in \Im(\hat{g'}_a)\}= \Im(\hat{g'}_a)+\gl''(a)$ and so $|\Im(\hat{g}_a)|=|\Im(\hat{g'}_a)|>2^{m-1}/ \gd$ for any $a \in \FF^m\setminus \{0\}$.
\end{proof}

As seen in Section \ref{sec:0}, if the image of a derivative of an S-Box is an affine space, then there may be a weakness in the cipher. In this direction the following theorem can be useful. Moreover, in Section \ref{sec:powfun} we prove a stronger result regarding the monomial functions (Corollary \ref{cor:monomial}).



\begin{theorem}\label{weaknotAPNcoset}
Let $f$ be a vBf on $\FF^m$ that is weakly APN but not APN. Then, there exists $a\in \FF^m\setminus\{0\}$  such that $\Im(\hat{f}_{a})$ is not a coset of a subspace $W\subseteq \FF^m$. 
\end{theorem}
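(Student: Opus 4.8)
The plan is to produce the required direction $a$ completely explicitly, relying on nothing more than a counting argument on the fibres of $\hat f_a$ together with the trivial observation that any coset of a linear subspace of $\FF^m$ has cardinality a power of two. The conceptual core is a \emph{cardinality gap}: weak APN-ness forces $|\Im(\hat f_a)|>2^{m-2}$, whereas a failure of APN in a fixed direction forces $|\Im(\hat f_a)|<2^{m-1}$, and there is no power of two strictly between $2^{m-2}$ and $2^{m-1}$. Matching a coset cardinality against this gap is what yields the claim.

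First I would record the elementary parity fact that, over $\FF=\f2$, every fibre of a derivative has even size: since $\hat f_a(x+a)=f(x)+f(x+a)=\hat f_a(x)$, the solutions of $\hat f_a(x)=b$ come in pairs $\{x,x+a\}$, so $\gd_f(a,b)$ is even for every $a\neq 0$ and every $b$. Because $f$ is not APN, by definition there exist $a\in\FF^m\setminus\{0\}$ and $b_0\in\FF^m$ with $\gd_f(a,b_0)\geq 4$; I would fix this $a$ for the rest of the argument.

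Next I would bound $|\Im(\hat f_a)|$ from above for this particular $a$. Writing $N=|\Im(\hat f_a)|$ and summing the fibre sizes over the image gives
$$
2^m=\sum_{b\in\Im(\hat f_a)}\gd_f(a,b)\geq \gd_f(a,b_0)+2(N-1)\geq 2N+2,
$$
since each of the $N$ nonempty fibres has even size at least $2$ and one of them has size at least $4$. Hence $N\leq 2^{m-1}-1<2^{m-1}$. On the other hand, the weak APN hypothesis applied to the \emph{same} direction $a$ gives $N>2^{m-1}/2=2^{m-2}$. Combining, $2^{m-2}<N<2^{m-1}$.

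Finally I would conclude: if $\Im(\hat f_a)$ were a coset of a subspace $W\subseteq\FF^m$, then $N=|W|=2^{k}$ for some $k$, which is impossible because no power of two lies strictly between $2^{m-2}$ and $2^{m-1}$. Thus $\Im(\hat f_a)$ is not a coset of a subspace, which is exactly the assertion. I do not expect a genuine obstacle here; the only point requiring care is the evenness of the fibre sizes, since that is precisely what produces the factor-of-two gap that the power-of-two cardinality of a coset cannot fit into, and the rest is a one-line counting estimate.
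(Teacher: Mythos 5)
Your proof is correct, and it rests on the same counting mechanism as the paper's: weak APN-ness pins $|\Im(\hat{f}_a)|$ strictly above $2^{m-2}$, the pairing $\hat{f}_a(x)=\hat{f}_a(x+a)$ controls fibre sizes, and a coset must have cardinality a power of two. The difference is purely organizational, but worth noting. The paper argues globally by contradiction: assuming \emph{every} $\Im(\hat{f}_a)$ is a coset, weak APN-ness forces each to have size exactly $2^{m-1}$, hence every derivative is $2$-to-$1$, hence $f$ is APN --- contradiction. You argue directly and locally: you fix one direction $a$ witnessing the failure of APN-ness (a fibre of size at least $4$) and show that for this $a$ the image size is trapped strictly between $2^{m-2}$ and $2^{m-1}$, where no power of two lives. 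This buys two small things. First, your conclusion is slightly stronger than bare existence: \emph{any} direction in which the differential uniformity exceeds $2$ works as the witness. Second, you make explicit the evenness-of-fibres step that the paper leaves implicit --- in passing from $|\Im(\hat{f}_a)|>2^{m-2}$ to $\dim_{\FF}(W)=m-1$, the paper tacitly uses the upper bound $|\Im(\hat{f}_a)|\le 2^{m-1}$, which is exactly the pairing argument you spell out. In exchange, the paper's version is shorter and isolates a clean implication of independent interest: if all derivative images of a weakly APN function are cosets, then the function is APN.
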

\begin{proof}

By contradiction suppose that for any $a\neq 0$ we have $ \Im(\hat{f}_{a})=w+ W$ for some $w \in \FF^m$ and $W$ vector subspace of $\FF^m$.
Since $f$ is weakly APN, $|\Im(\hat{f}_{a})|>2^{m-2}$, thus $\dim_{\FF}(W)=m-1$. Therefore, we have that $\hat{f}_a$ is a $2$-to-$1$ function for all $a\neq 0$, which means $f$ is APN, contradicting our assumption.
\end{proof}

\section{Power functions}\label{sec:powfun}

In this section we focus on monomial functions, also called \emph{power functions}. In particular we prove that the weak differential uniformity of a function $f$ is equal to that of $f^{-1}$, and we show some properties  of $\Im(\hat{f}_a)$.

In this section when we write $f=x^d$ we mean that $f$ is a power function on $\fm$ for any $0\leq d\leq 2^{m-1}$. We will also
 identify $\fm$ and $\FF^m$ without any further comments.

The following result is well-known (see for instance \cite{power}).

\begin{lemma}\label{prop:1}
Let $f(x)=x^d$. For any $a,a'\in \FF^m$, with $a,a'\neq 0$, and any $0\le i\le 2^{m}$, we have
$$
|\{b\in\FF^m \mid \gd_f(a',b)=i\}|=|\{b\in\FF^m \mid \gd_f(a,b)=i\}|.
$$
\end{lemma}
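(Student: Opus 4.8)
The plan is to reduce the computation of the differential spectrum at an arbitrary nonzero $a$ to a single reference point, say $a=1$, by exploiting the multiplicative homogeneity of $f(x)=x^d$. First I would fix $a\neq 0$ and perform the substitution $x=ay$, which is a bijection of $\FF^m$ precisely because $a\in\fm^\ast$. A direct computation then gives $\hat f_a(ay)=(ay+a)^d+(ay)^d=a^d\big((y+1)^d+y^d\big)=a^d\,\hat f_1(y)$, so that the derivative at $a$ is, up to the fixed nonzero scalar $a^d$ and the change of variable $x=ay$, nothing but the derivative at $1$.

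From this identity I would read off the pointwise relation between the differential counters. Indeed, $\gd_f(a,b)=|\{y\in\FF^m : a^d\hat f_1(y)=b\}|=|\{y\in\FF^m:\hat f_1(y)=a^{-d}b\}|=\gd_f(1,a^{-d}b)$, where $a^{-d}$ is well defined exactly because $a\neq 0$. Thus the map $b\mapsto\gd_f(a,b)$ is obtained from $b'\mapsto\gd_f(1,b')$ simply by precomposing with the bijection $b\mapsto a^{-d}b$ of $\FF^m$.

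Finally I would conclude by counting. Since $b\mapsto a^{-d}b$ permutes $\FF^m$, for every $0\le i\le 2^m$ we obtain $|\{b:\gd_f(a,b)=i\}|=|\{b:\gd_f(1,a^{-d}b)=i\}|=|\{b':\gd_f(1,b')=i\}|$, and the very same identity holds with $a$ replaced by $a'$. Comparing the two shows that both sides of the claimed equality equal the corresponding count at $a=1$, which proves the lemma. I do not expect any serious obstacle here: the only points demanding care are that the substitution $x=ay$ is a genuine bijection and that $a^d$ is invertible, so that the reindexing $b\mapsto a^{-d}b$ is a bijection; both of these can fail only when $a=0$, a case excluded by hypothesis.
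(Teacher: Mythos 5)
Your proof is correct. The paper does not prove this lemma at all (it states it as well-known, citing the literature on power functions), but your multiplicative-homogeneity argument --- the substitution $x=ay$ giving $\hat{f}_a(ay)=a^d\hat{f}_1(y)$, hence $\gd_f(a,b)=\gd_f(1,a^{-d}b)$, followed by reindexing via the bijection $b\mapsto a^{-d}b$ --- is exactly the standard proof, and is the same identity the paper itself uses in proving Lemma~\ref{prop:2}.
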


\noindent In other words, the differential characteristics of a monomial function depend only on $b$.

\begin{definition}
Let $f(x)=x^d$ and $0\le i\le 2^m$. We denote by $\go_i$ the number of output differences of $b$ that occur $i$ times, that is
$$
\go_i(f)=|\{b \in \FF^m \mid \gd_f(1,b)=i\}|.
$$
The \emph{differential spectrum} of $f$ is the sequence of $\go_i(f)$'s, denoted by $\S(f)$. 
\end{definition}

\begin{remark}
Note that if a monomial function $f$ has $2^s$-to-$1$ derivatives then it is weakly $2^s$-differential uniform, since  $|\Im(\hat{f}_a)|=2^{m-s}$ for any $a\in\FF^m\setminus\{0\}$. 
\end{remark}

The following lemma is well-known, for instance see Lemma 1 in \cite{power}.

\begin{lemma}\label{lm:1}
Let $f(x)=x^d$ and $g(x)=x^e$. If
\begin{itemize}
\item $\gcd(2^m-1,d)=1$ and $ed\equiv {1} \mod (2^m-1)$, 
\item[ ] or
\item $e\equiv 2^kd \mod  (2^m-1)$, for some $k$, $0\leq k\leq m$,
\end{itemize}
then $\quad \S(f)=\S(g)$.

\end{lemma}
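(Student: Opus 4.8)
The plan is to treat the two cases separately, after recording the two structural facts about power functions that we will use. Write $T_f(a,b)=\gd_f(a,b)$ for the difference distribution table of $f$; by Lemma \ref{prop:1} the sequence $\S(f)$ is precisely the distribution of the entries in any fixed nonzero row $a$ of $T_f$, so it suffices to compare such row distributions. In both cases $g$ is again a power function, so $\S(g)$ is well defined in the same way (again via Lemma \ref{prop:1}).

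Case 2 is the easy one. Since $e\equiv 2^k d \pmod{2^m-1}$, as functions on $\fm$ we have $g(x)=x^{2^k d}=(x^d)^{2^k}=\phi(f(x))$, where $\phi(y)=y^{2^k}$ is a power of the Frobenius map, hence an additive bijection of $\fm$. Because $\phi$ is additive, $\hat{g}_a=\phi\circ\hat{f}_a$, so $\gd_g(a,b)=\gd_f(a,\phi^{-1}(b))$. As $\phi$ is a bijection this merely relabels the second argument, the row distributions of $T_g$ and $T_f$ coincide, and $\S(f)=\S(g)$.

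Case 1 is the substantive one. Here $\gcd(2^m-1,d)=1$ makes $x\mapsto x^d$ a permutation of $\fm$, and $ed\equiv 1$ forces $g=f^{-1}$. I would first establish two symmetries of $T_f$. The first is the transpose symmetry $\gd_{f^{-1}}(b,a)=\gd_f(a,b)$, proved by a pairing argument: a solution $x$ of $\hat{f}_a(x)=b$ yields the pair of range points $\{f(x),f(x)+b\}$, whose $f^{-1}$-images differ by $a$, and this correspondence is a bijection between the two solution sets. The second is the power-function scaling symmetry $\gd_f(ca,c^d b)=\gd_f(a,b)$ for $c\neq 0$, obtained from the substitution $x\mapsto cx$ in the defining equation. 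Combining the transpose symmetry with the fact that $\S(f^{-1})$ is the distribution of the row $a=1$ of $T_{f^{-1}}$, one gets $\gd_{f^{-1}}(1,b)=\gd_f(b,1)$, so $\S(f^{-1})$ equals the distribution of the column $b=1$ of $T_f$, i.e. of the values $\gd_f(c,1)$ with $c\in\FF^m$. The scaling symmetry with $a=1$ gives $\gd_f(c,1)=\gd_f(1,c^{-d})$ for $c\neq 0$, and since $\gcd(2^m-1,d)=1$ the map $c\mapsto c^{-d}$ permutes $\fm^{*}$; moreover the boundary entries $\gd_f(0,1)$ and $\gd_f(1,0)$ are both $0$. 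Hence column $1$ and row $1$ of $T_f$ have the same distribution, giving $\S(f^{-1})=\S(f)$.

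The routine parts are the two substitutions; the conceptual crux is the observation that passing to $f^{-1}$ transposes the table, while the power-function scaling converts a column back into a row exactly when $d$ is invertible modulo $2^m-1$. The only point demanding care is matching the $a=0$ and $b=0$ boundary entries, so that the two multisets agree exactly and not merely on their nonzero parts.
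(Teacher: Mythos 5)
Your proof is correct, but there is nothing in the paper to compare it against: the authors give no proof of this lemma at all, stating it as well-known and citing Lemma 1 of \cite{power}. Your argument is essentially a self-contained reconstruction of the standard one behind that citation. Case 2 is handled by writing $g=\phi\circ f$ with $\phi(y)=y^{2^k}$ an additive bijection (iterated Frobenius), so each row of the difference table of $g$ is a relabelling of the corresponding row of the table of $f$; this part is routine and correct. Case 1 correctly combines two exact symmetries: the transposition identity $\gd_{f^{-1}}(b,a)=\gd_f(a,b)$, valid for any permutation (your pairing $x\mapsto f(x)$ is indeed a bijection between the two solution sets), and the power-function scaling $\gd_f(ca,c^db)=\gd_f(a,b)$ for $c\neq 0$, which converts the column $b=1$ of the table of $f$ into its row $a=1$ up to the map $c\mapsto c^{-d}$, a permutation of $\fm\setminus\{0\}$ precisely because $\gcd(2^m-1,d)=1$. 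Your attention to the boundary entries $\gd_f(0,1)=\gd_f(1,0)=0$ (the latter using that $f$ is a permutation in this case) is exactly what is needed for the two multisets to agree on all of $\FF^m$ and not merely on their nonzero parts, and it is a point glossed over in many textbook statements. The only caveat worth a remark is the degenerate exponents $d=0$ or $e=0$ (constant functions), where identities such as $x^e=(x^d)^{2^k}$ can fail at $x=0$; these are implicitly excluded by the context and do not affect the substance of your argument.
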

From Lemma \ref{lm:1} we obtain our first result.

\begin{corollary}
Let $f(x)=x^d$ with $\gcd(2^m-1,d)=1$. Then $f$ is weakly $\gd$-differential uniform if and only if $f^{-1}$ is weakly $\gd$-differential uniform.
\end{corollary}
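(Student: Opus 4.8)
The plan is to reduce the claim about weak differential uniformity directly to the differential spectrum, which Lemma~\ref{lm:1} controls. The key observation is that the quantity $|\Im(\hat{f}_a)|$ can be recovered from the differential spectrum of $f$. Indeed, by Lemma~\ref{prop:1} the differential behaviour of a power function does not depend on the choice of nonzero $a$, so it suffices to work with $a=1$ throughout, and for this single derivative we have
$$
|\Im(\hat{f}_1)| = \sum_{i\ge 1} \go_i(f),
$$
since $b\in\Im(\hat{f}_1)$ precisely when $\gd_f(1,b)\ge 1$, and $\go_i(f)$ counts the values $b$ attained exactly $i$ times. Thus $|\Im(\hat{f}_a)|$ for any $a\neq 0$ is a functional of $\S(f)$ alone.

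Next I would invoke the hypothesis $\gcd(2^m-1,d)=1$, which guarantees that $f(x)=x^d$ is a permutation and that there exists $e$ with $ed\equiv 1 \pmod{2^m-1}$; this $e$ is exactly the exponent of the inverse permutation, so $f^{-1}(x)=x^e$ on $\fm$ (with the usual care at $0$, which the identification of $\fm$ with $\FF^m$ handles). The first bullet of Lemma~\ref{lm:1}, applied with $g=f^{-1}=x^e$, then yields $\S(f)=\S(f^{-1})$. Combining this with the formula above gives $|\Im(\hat{f}_a)|=|\Im(\widehat{(f^{-1})}_{a'})|$ for all nonzero $a,a'$, and in particular the two functions satisfy the inequality $|\Im(\hat{f}_a)|>2^{m-1}/\gd$ for the same values of $\gd$.

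From here the conclusion is immediate: $f$ is weakly $\gd$-differential uniform if and only if $|\Im(\hat{f}_a)|>2^{m-1}/\gd$ for all $a\neq 0$, and by the spectrum identity this holds if and only if the same inequality holds for $f^{-1}$, i.e.\ if and only if $f^{-1}$ is weakly $\gd$-differential uniform. I do not expect a serious obstacle here, since the heavy lifting is done by the two cited lemmas; the only point demanding mild care is the bookkeeping that identifies the inverse power exponent $e$ correctly and confirms that $f^{-1}$ really is the monomial $x^e$, so that Lemma~\ref{lm:1} applies verbatim. Writing $|\Im(\hat{f}_a)|$ as a sum over the spectrum is the small conceptual step that links the spectrum-preservation of Lemma~\ref{lm:1} to the image-size condition in Definition~\ref{def:weakly}.
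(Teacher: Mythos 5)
Your proposal is correct and follows essentially the same route as the paper: both arguments reduce the image size $|\Im(\hat{f}_a)|$ (which, by Lemma~\ref{prop:1}, is independent of the nonzero $a$) to the differential spectrum, and then invoke the first bullet of Lemma~\ref{lm:1} to transfer the spectrum from $f$ to $f^{-1}=x^e$. Your expression $\sum_{i\ge 1}\go_i(f)$ is just the paper's $2^m-\go_0$ written as a sum, so the two proofs coincide in substance.
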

\begin{proof}
For a power function we have
$$
|\Im(\hat{f}_a)|=|\Im(\hat{f}_1)| = 2^m-\go_0,\quad\forall a \neq 0.
$$
From Lemma \ref{lm:1} we have $\go_0(f)=\go_0(f^{-1})$, and this concludes the proof. 
\end{proof}

Consider the following lemma for a power function (not necessarily a permutation).

\begin{lemma}\label{prop:2}
Let $f(x)= x^d$. If there exists $a \in \FF^m$, $a\neq 0$, such that $\Im(\hat{f}_a)$ is a coset of a subspace of $\FF^m$, then $\Im(\hat{f}_{a'})$ is a coset of subspace of $\FF^m$ for all $a'\neq 0$.
\end{lemma}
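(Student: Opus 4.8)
The plan is to exploit the multiplicative homogeneity of power functions to reduce every derivative to the single derivative $\hat{f}_1$ by a field scaling, and then to observe that scaling by a nonzero field element is an $\FF$-linear bijection of $\fm$, so it sends cosets of subspaces to cosets of subspaces.

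First I would establish the key scaling identity. For $a\neq 0$ and any $x\in\FF^m$, replacing the argument by $ax$ gives
$$
\hat{f}_a(ax)=(ax+a)^d+(ax)^d=a^d\big((x+1)^d+x^d\big)=a^d\,\hat{f}_1(x).
$$
Since $x\mapsto ax$ is a bijection of $\fm$ (as $a\neq 0$), passing to images yields $\Im(\hat{f}_a)=a^d\,\Im(\hat{f}_1)$, where $a^d S$ denotes the set $\{a^d s\mid s\in S\}$ under field multiplication.

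Next, for an arbitrary $a'\neq 0$ I would combine two instances of this identity. Setting $c=(a'a^{-1})^d\in\fm\setminus\{0\}$, the relations $\Im(\hat{f}_{a'})=(a')^d\,\Im(\hat{f}_1)$ and $\Im(\hat{f}_a)=a^d\,\Im(\hat{f}_1)$ give $\Im(\hat{f}_{a'})=c\,\Im(\hat{f}_a)$. Then I would invoke the hypothesis: $\Im(\hat{f}_a)=w+W$ for some $w\in\FF^m$ and some $\FF$-subspace $W$. Multiplication by the fixed nonzero element $c$ is an $\FF$-linear bijection of $\fm\cong\FF^m$, so $cW$ is again an $\FF$-subspace and $c(w+W)=cw+cW$. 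Hence $\Im(\hat{f}_{a'})=cw+cW$ is a coset of a subspace, which is exactly what is claimed.

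The only delicate point — and the conceptual heart of the argument — is that passing from $a$ to $a'$ multiplies the image set by a nonzero scalar rather than merely translating it; the proof works precisely because multiplication by a field element respects the $\FF$-linear structure, so a coset of $W$ is carried to a coset of the (possibly different) subspace $cW$. Beyond verifying this linearity there is no real obstacle, and the computation of the scaling identity $\hat{f}_a(ax)=a^d\hat{f}_1(x)$ is routine.
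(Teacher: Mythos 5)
Your proof is correct and is essentially the same argument as the paper's: both rest on the homogeneity identity $\Im(\hat{f}_{a'})=\left(\frac{a'}{a}\right)^d\Im(\hat{f}_a)$ together with the observation that multiplication by a nonzero field element is an $\FF$-linear bijection, hence carries cosets of subspaces to cosets of subspaces. The only cosmetic difference is that you pass through $\hat{f}_1$ as an intermediate, while the paper relates $\hat{f}_{a'}$ to $\hat{f}_a$ in a single computation.
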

\begin{proof}
We have $\Im(\hat{f}_a)=w+W$, where $W$ is a $\FF$-vector subspace of $\FF^m$ and $w\in \FF^m$.
If we now consider $a'\in \FF^m\setminus\{0\}$ we have 
$$
\hat{f}_{a'}(x)=(x+a')^d+x^d=\left(\frac{a'}{a}\right)^d\left[\left(x\frac{a}{a'}+a\right)^d+\left(x\frac{a}{a'}\right)^d\right]=\left(\frac{a'}{a}\right)^d\hat{f}_a\left(x\frac{a}{a'}\right).
$$
Therefore, $\Im(\hat{f}_{a'})=\left( \frac{a'}{a} \right)^d \Im(\hat{f}_{a})=\left(\frac{a'}{a}\right)^d w+\left(\frac{a'}{a}\right)^dW=w'+W'$.
Since $W'=(a'/a)^d W$, also $W'$ is  an $\FF$-vector subspace of $\FF^m$ and our claim is proved.
\end{proof}

Here we give a sufficient condition for a power function to thwart the aforementioned weakness.\\
The following result is an obvious consequence  of Theorem \ref{weaknotAPNcoset} and Lemma \ref{prop:2}.
\begin{corollary}\label{cor:monomial}
Let $f$ be a vBf permutation on $\FF^m$ that is weakly APN but not APN. If $f(x)=x^d$, then for all $a\in \FF^m\setminus\{0\}$, $\Im(\hat{f}_{a})$ is not a coset of a subspace $W\subseteq \FF^m$.
\end{corollary}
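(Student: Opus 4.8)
The plan is to combine the two preceding results, exploiting the fact that $f$ is simultaneously a permutation that is weakly APN but not APN (so Theorem \ref{weaknotAPNcoset} applies) and a power function (so Lemma \ref{prop:2} applies). The key observation is that these two results pull in opposite logical directions on the same quantity $\Im(\hat{f}_a)$, and together they force an all-or-nothing conclusion.

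First I would invoke Theorem \ref{weaknotAPNcoset}. Since $f$ is weakly APN but not APN, it guarantees the existence of at least one nonzero direction $a_0 \in \FF^m \setminus \{0\}$ for which $\Im(\hat{f}_{a_0})$ fails to be a coset of any subspace $W \subseteq \FF^m$. Note that this yields only a single bad direction, which for a general vBf is all one can hope for.

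Next I would upgrade this single bad direction to all directions using the power-function structure. Here the contrapositive of Lemma \ref{prop:2} is exactly what is needed: Lemma \ref{prop:2} asserts that for $f(x) = x^d$, if even one derivative image $\Im(\hat{f}_a)$ is a coset of a subspace, then every derivative image $\Im(\hat{f}_{a'})$ is a coset of a subspace. Contrapositively, the existence of one direction $a_0$ for which $\Im(\hat{f}_{a_0})$ is not a coset forces $\Im(\hat{f}_{a'})$ to fail to be a coset for every $a' \neq 0$. Applying this with the $a_0$ produced in the previous step delivers the claim for all $a \in \FF^m \setminus \{0\}$.

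There is essentially no hard obstacle here; the content lies entirely in the two cited results. The only point requiring care is the logical bookkeeping: Theorem \ref{weaknotAPNcoset} supplies an existential statement (``some $a_0$ is bad''), whereas the corollary demands a universal one (``every $a$ is bad''), and it is precisely the rigidity of power functions encoded in Lemma \ref{prop:2}---the fact that each $\hat{f}_{a'}$ is obtained from $\hat{f}_{a_0}$ by the scalar substitution used in its proof, so that being a coset is a direction-independent property---that bridges this gap.
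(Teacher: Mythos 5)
Your proposal is correct and is precisely the paper's argument: the paper states the corollary as ``an obvious consequence of Theorem \ref{weaknotAPNcoset} and Lemma \ref{prop:2}'', i.e.\ the theorem supplies one direction $a_0$ whose derivative image is not a coset, and the contrapositive of Lemma \ref{prop:2} (being a coset is direction-independent for power functions) propagates this failure to every nonzero direction. Your logical bookkeeping of the existential-to-universal upgrade is exactly the intended reasoning.
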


\section{Weakly APN functions and degrees of derivatives}\label{sec:degree}

Without loss of generality, in the sequel we consider only vBf's such that $f(0) = 0$. Let $v\in\FF^m\setminus\{0\}$, we denote by $\langle f,v\rangle$ the component $\sum_{i=1}^m v_if_i$ of $f$, where $f_1,\dots,f_m$ are the coordinate functions of $f$.

\begin{definition}
The \emph{algebraic degree} of a vectorial Boolean function $f$ is the maximal algebraic degree of the coordinate functions of $f$ and is denoted by $\deg(f)$.
\end{definition}

We recall the following non-linearity measures, as introduced in \cite{onwAPN}:
$$
n_i(f):=|\{v\,\in\FF^m \setminus\{0\}\mid\deg(\langle f,v\rangle)=i\}|,
$$
and
$$
\hat n (f):=\max_{a\in\FF^m\setminus\{0\}}|\{v\,\in\FF^m \setminus\{0\}\mid\deg(\langle\hat{f}_a,v\rangle)=0\}|.
$$
\noindent In other words, $n_i(f)$ corresponds to the number of component functions of $f$ which are of
degree $i$ and $\hat n (f)$ corresponds to the number of components of the derivative functions of $f$ which
are constant.
\\
We state two lemmas useful to extend some results of \cite{onwAPN}. First, we recall that the algebraic degree of a permutation $f(x)=x^d$ is the Hamming weight of the binary representation of $d$, denoted by $\mathrm{w}(d)$.
\begin{lemma}[\cite{Kyu07}, Corollary 6]\label{Kyu07}
Let $f(x) = x^d$ be a permutation. Then $\hat{f}_1$ has at least one constant component  if and only if $\mathrm{deg}(f) = 2$.
\end{lemma}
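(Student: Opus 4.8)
The plan is to pass to the univariate/ANF picture and then treat the two implications separately. Write $t=\mathrm{w}(d)=\deg(f)$ and $d=2^{i_1}+\dots+2^{i_t}$. Since $(x+1)^{2^{i}}=x^{2^i}+1$ in characteristic $2$, multiplying out gives
$$
\hat{f}_1(x)=(x+1)^d+x^d=\sum_{S\subsetneq\{1,\dots,t\}}x^{e_S},\qquad e_S=\sum_{j\in S}2^{i_j},
$$
the full subset being cancelled by $x^d$. As the $2^{i_j}$ are distinct, $\mathrm{w}(e_S)=|S|$, so the monomials are pairwise distinct and $\deg(\hat{f}_1)=t-1$. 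I will use that, under the trace form, each nonzero component is $\langle\hat{f}_1,v\rangle=\mathrm{Tr}(\gb\hat{f}_1)$ for a unique $\gb\in\fm^{*}$, and that $\langle\hat{f}_1,v\rangle(x)=\langle f,v\rangle(x+1)+\langle f,v\rangle(x)$ is the derivative in direction $1$ of the component $\langle f,v\rangle=\mathrm{Tr}(\gb x^d)$. Hence ``$\hat{f}_1$ has a constant component'' is equivalent to ``$1$ is a linear structure of $\mathrm{Tr}(\gb x^d)$ for some $\gb\neq 0$'' (recall $a$ is a linear structure of $h$ when $h(x+a)+h(x)$ is constant). Throughout I assume $f$ nonlinear, i.e.\ $t\ge 2$; the case $t=1$ is degenerate ($\hat{f}_1\equiv 1$) and must be excluded from the statement.

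For the implication $\deg(f)=2\Rightarrow$ constant component I would compute directly: if $t=2$ the cross term cancels and
$$
\hat{f}_1(x)=x^{2^{i_1}}+x^{2^{i_2}}+1,
$$
an affine map. Using $\mathrm{Tr}(\gb x^{2^{i}})=\mathrm{Tr}(\gb^{2^{-i}}x)$, the component $\mathrm{Tr}(\gb\hat{f}_1)$ has linear part $\mathrm{Tr}\bigl((\gb^{2^{-i_1}}+\gb^{2^{-i_2}})x\bigr)$, which vanishes exactly when $\gb^{2^{i_1-i_2}}=\gb$. The nonzero solutions form the subfield $\mathbb{F}_{2^{\gcd(i_1-i_2,m)}}\subseteq\fm$, of size $2^{\gcd(i_1-i_2,m)}\ge 2$ since $i_1\neq i_2$. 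Any such $\gb\neq 0$ makes the component constant, settling this direction (the permutation hypothesis is not needed here).

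The substance is the converse, which I would prove in contrapositive form: if $t\ge 3$ then no component is constant. Here the permutation hypothesis $\gcd(d,2^m-1)=1$ is essential; e.g.\ $d=7$, $m=3$ is \emph{not} a permutation and has $\Im(\hat{f}_1)=\{0,1\}$, hence constant components despite $t=3$. The first dividend of $\gcd(d,2^m-1)=1$ is that the cyclotomic coset of $d$ has full size $m$, so the coefficient of $x^d$ in $\mathrm{Tr}(\gb x^d)$ is exactly $\gb$, and every nonzero component therefore has algebraic degree exactly $t$. I would then fix $\gb\neq 0$ and try to exhibit a surviving ANF monomial of positive degree in $\mathrm{Tr}(\gb\hat{f}_1)=\sum_{S\subsetneq\{1,\dots,t\}}\mathrm{Tr}(\gb x^{e_S})$, grouping the exponents $e_S$ into cyclotomic cosets and combining the traces coset by coset.

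The main obstacle is precisely this cancellation across cosets. Restricting to the top degree $t-1$ does \emph{not} suffice: the degree-$(t-1)$ part can genuinely vanish for special nonzero $\gb$ (for $d=7$, $m=4$ it vanishes at $\gb=1$, where it is rescued only by the surviving degree-$1$ term $\mathrm{Tr}(x)$), so one is forced to track lower-degree parts as well. The heart of the argument is thus a careful analysis of the univariate coefficients $a_j=\sum_{(S,k):\,e_S2^k\equiv j}\gb^{2^k}$, showing that for every $\gb\neq 0$ at least one $a_j$ with $\mathrm{w}(j)\ge 1$ is nonzero; the permutation hypothesis is exactly what rules out the degenerate coset coincidences (as in the $m=3$ example) and guarantees such a surviving coefficient. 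I expect this coset/coefficient bookkeeping, rather than any conceptual step, to be the crux, and I would lean on $\gcd(d,2^m-1)=1$ throughout to control coset sizes and forbid spurious self-cancellations.
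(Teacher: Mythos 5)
First, a point of comparison: the paper does not prove this lemma at all --- it is quoted from the literature (\cite{Kyu07}, Corollary 6) and used as a black box. So what you attempted is a reproof of Kyureghyan's result, and your attempt contains a genuine gap: only the easy implication is actually proved. Your computation $\hat{f}_1(x)=x^{2^{i_1}}+x^{2^{i_2}}+1$ for $\mathrm{w}(d)=2$, and the observation that $\mathrm{Tr}(\beta\hat{f}_1)$ is constant for every $\beta$ in the subfield $\mathbb{F}_{2^{\gcd(i_1-i_2,m)}}$ (in particular $\beta=1$), is correct and complete, and you are right that this direction needs no permutation hypothesis. Your side remark that $\mathrm{w}(d)=1$ must be excluded (then $\hat{f}_1\equiv 1$ and every component is constant) is also correct and shows appropriate care.

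The converse --- for a permutation with $\mathrm{w}(d)\ge 3$, no component of $\hat{f}_1$ is constant --- is the entire substance of the cited corollary, and you do not prove it. You reduce it to the claim that for every $\beta\neq 0$ some univariate coefficient $a_j$ with $\mathrm{w}(j)\ge 1$ survives the cross-coset cancellation, and then assert that $\gcd(d,2^m-1)=1$ ``is exactly what rules out the degenerate coset coincidences''; no argument is offered for this, and it is the whole difficulty. Indeed your own example shows why no routine argument exists: for $d=7$, $m=4$ (a permutation exponent) the entire top-degree part of $\mathrm{Tr}(\hat{f}_1)$ cancels at $\beta=1$, so cancellation across cosets genuinely occurs even in the permutation case, and what must be shown is the much subtler statement that it can never kill \emph{all} non-constant terms simultaneously, for every $\beta\neq 0$ and every $d$ of weight $\ge 3$. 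Nothing in the proposal makes this bookkeeping go through, and it is unlikely to be the right route: the proof in \cite{Kyu07} does not proceed by ANF coefficient tracking but through a structural analysis of linear structures of monomial functions, exploiting the multiplicative equivariance $x\mapsto cx$ of power maps (which lets one move linear structures around) together with Walsh-transform characterizations. As it stands, the proposal establishes the trivial half of the equivalence and replaces the nontrivial half by a statement of intent.
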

\begin{lemma}[\cite{Her05}, Theorem 1] \label{Her05}
Let $f(x) = x^d$, with $d = 2^{2k} - 2^k + 1$ (Kasami exponent), $\gcd(k, n) = s$ and $\frac{n}{s}$ odd. Then $\hat{f}_1$ is a $2^s$-to-$1$ function.
\end{lemma}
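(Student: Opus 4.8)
The plan is to pass to the differential spectrum and reduce the statement to a root count. By Lemma~\ref{prop:1} the numbers $\gd_f(a,b)$ are, for a power function, independent of the nonzero direction $a$, so it suffices to analyse $a=1$; thus $\hat{f}_1$ being $2^s$-to-$1$ is equivalent to $\gd_f(1,b)\in\{0,2^s\}$ for every $b\in\fn$, i.e.\ to $\S(f)$ being supported on $\go_0$ and $\go_{2^s}$ only. Concretely I would count the solutions of
\begin{equation*}
(x+1)^d+x^d=b .
\end{equation*}
The only free observation is that $x\mapsto x+1$ permutes this solution set, so each nonempty fibre has even size; all the content of the lemma is that the size is exactly $2^s$.

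The engine of the argument is the defining arithmetic of the Kasami exponent, the identity
\begin{equation*}
d\,(2^k+1)=2^{3k}+1 .
\end{equation*}
The idea is to use it to trade the high-degree map $x\mapsto x^d$ for the two Gold exponents $2^k+1$ and $2^{3k}+1$, rewriting the equation above as the vanishing of a \emph{linearized} polynomial after an invertible change of variable. The subfield $\FF_{2^s}=\FF_{2^{\gcd(k,n)}}$ enters as the kernel of the linearized map $x\mapsto x^{2^k}+x$ attached to the exponent $2^k+1$: since $\gcd(k,n)=s$, this kernel has $\FF$-dimension exactly $s$, which is the source of the factor $2^s$. This is entirely parallel to the elementary Gold case $k=1$, where $\hat{f}_1(x)=x^{2}+x+1$ is already affine and visibly $2^{\gcd(k,n)}$-to-$1$; for $k\ge 2$ the derivative has algebraic degree $k$ and is no longer affine, so the linear structure must be recovered through the substitution rather than read off directly.

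Two points carry the real weight, and the first is the main obstacle. Because $x\mapsto x^d$ is far from additive, one must show that the substitution built from $d(2^k+1)=2^{3k}+1$ is a genuine bijection between the solution set of the original equation and that of the linearized one, creating and destroying no solutions; this is exactly where the identity is spent. Here the hypothesis that $\tfrac{n}{s}$ is odd becomes indispensable: writing $n=st$ with $t$ odd one computes $n/\gcd(3k,n)=t/\gcd(3,t)$, again odd, so that $\gcd(2^{3k}+1,2^{n}-1)=1$ and $x\mapsto x^{2^{3k}+1}$ is a \emph{permutation} of $\fn$, which is precisely what makes the change of variable invertible. The second point is then bookkeeping: with the substitution a bijection, every nonempty fibre of $\hat{f}_1$ is in bijection with a single coset of the $s$-dimensional kernel $\FF_{2^s}$, hence has $2^s$ elements, and $\hat{f}_1$ is $2^s$-to-$1$. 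I expect the delicate step to be controlling the substitution globally, and not merely generically, since the relation $d(2^k+1)=2^{3k}+1$ is an identity of exponents modulo $2^n-1$ and must be handled with care at $x=0$ and wherever the intermediate Gold maps are inverted.
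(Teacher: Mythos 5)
The paper offers no proof of this lemma for you to be compared against: it is imported verbatim as Theorem~1 of Hertel's note \cite{Her05}, so your proposal has to stand on its own as a proof of that theorem. Your peripheral computations are all correct: $d(2^k+1)=2^{3k}+1$; writing $n=st$ with $t$ odd and $k=sk'$, $\gcd(k',t)=1$, one gets $\gcd(3k,n)=s\gcd(3,t)$, hence $n/\gcd(3k,n)=t/\gcd(3,t)$ is odd and both $x\mapsto x^{2^k+1}$ and $x\mapsto x^{2^{3k}+1}$ are permutations of $\fn$; and $\ker\bigl(x\mapsto x^{2^k}+x\bigr)=\FF_{2^s}$ has dimension $s$. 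But none of this touches the statement, and the one sentence that would constitute the proof --- that the identity lets one rewrite $(x+1)^d+x^d=b$ ``as the vanishing of a linearized polynomial after an invertible change of variable'' --- is asserted, never constructed, and is not achievable in the naive form you describe: substituting $x=y^{2^k+1}$ (legitimate, since this is a permutation) turns $x^d$ into $y^{2^{3k}+1}$ but turns $(x+1)^d$ into $(y^{2^k+1}+1)^d$, which is neither a power of $y$ nor linearized, so no single-variable change of coordinates comes out of the identity.

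The workable version of your idea (essentially Hertel's, following Dobbertin) is two-variable: write $x=y^{2^k+1}$ and $x+1=z^{2^k+1}$, set $u=y+z\neq 0$ and $w=y/u$; then $\hat{f}_1(x)=b$ becomes the system
\begin{equation*}
w^{2^k}+w=u^{-(2^k+1)}+1,\qquad w^{2^{3k}}+w=b\,u^{-(2^{3k}+1)}+1 ,
\end{equation*}
and since $w^{2^{3k}}+w=(w^{2^k}+w)^{2^{2k}}+(w^{2^k}+w)^{2^k}+(w^{2^k}+w)$, the second equation collapses to a compatibility condition on $u$ alone (after the substitutions $v=u^{-1}$, $r=v^{2^{2k}-1}$ it reads $r^{2^k+1}+br^{2^k}+r+1=0$). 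For each admissible $u$ the solutions $w$ of the first equation form one coset of $\FF_{2^s}$, so all this proves is that every fibre of $\hat{f}_1$ has size $2^s\cdot\#\{\mbox{admissible }u\}$, a \emph{multiple} of $2^s$. The theorem is precisely the claim that for each $b$ at most one $u$ is admissible, i.e.\ that the nonlinear compatibility equation, restricted to $(2^{2k}-1)$-th powers and subject to the solvability (trace) condition of the first equation, never has two admissible roots. That uniqueness statement is the entire content of \cite{Her05}; it is where the hypothesis that $n/s$ is odd is spent again, it is not ``bookkeeping'', and your proposal does not address it at all. So the proposal assembles the correct arithmetic ingredients but leaves the theorem itself unproved.
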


\begin{theorem}\label{th:wapn}

Let $f$ be a vBf permutation such that $\hat{n}(f)=0$, i.e. no component of $f$ has linear structure. Then

\noindent (i) if $m=3$ then $f$ is weakly APN;\\
(ii) if $m=4$ then $f$ is weakly APN;\\
(iii) if $m = 2n$, with $n$ odd, then $f$ is not necessarily weakly APN.
\end{theorem}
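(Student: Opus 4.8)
The plan is to treat the three claims separately, noting that the hypothesis $\hat{n}(f)=0$ asserts precisely that no component of $f$ has a nonzero linear structure, i.e.\ for every $a\neq 0$ and every $v\neq 0$ the component $\langle \hat{f}_a,v\rangle$ is nonconstant. The link I want to exploit is the standard duality between the image size of $\hat{f}_a$ and the constancy of its components: a function $\hat{f}_a$ is $2^s$-to-$1$ onto a coset of an $(m-s)$-dimensional space exactly when the set of $v$ with $\langle \hat{f}_a,v\rangle$ constant forms an $s$-dimensional space. So small image of $\hat{f}_a$ forces many constant derivative-components, which $\hat{n}(f)=0$ forbids. I would make this quantitative by relating $|\Im(\hat{f}_a)|$ to $\hat{n}(f)$ via a Fourier/Walsh argument or a direct counting lemma, and this relation is the engine for parts (i) and (ii).

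For part (i), $m=3$: weak APN requires $|\Im(\hat{f}_a)|>2^{m-2}=2$, so I only need to rule out $|\Im(\hat{f}_a)|\le 2$ for some $a\neq 0$. If $|\Im(\hat{f}_a)|=2$, then $\hat{f}_a$ takes two values and so $\hat{f}_a$ is $4$-to-$1$ onto a $2$-element set; by the duality above this yields an $(m-1)$-dimensional (here $2$-dimensional) space of $v$ with $\langle\hat{f}_a,v\rangle$ constant, contradicting $\hat{n}(f)=0$. The case $|\Im(\hat{f}_a)|=1$ would make $\hat{f}_a$ constant, forcing every component constant, even more strongly contradicting the hypothesis. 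So part (i) follows directly. For part (ii), $m=4$: weak APN requires $|\Im(\hat{f}_a)|>2^{m-2}=4$, i.e.\ $|\Im(\hat{f}_a)|\ge 5$. The obstruction is that $\hat{n}(f)=0$ on its own only rules out image size $\le 2$ (a $1$-dimensional or larger space of constant components), so I need a finer argument in dimension $4$ to exclude $|\Im(\hat{f}_a)|\in\{3,4\}$ as well; here I would use that $f$ is a permutation and that $\hat{f}_a$ hits each value an even number of times, so $|\Im(\hat{f}_a)|\le 4$ would mean the $16$ inputs distribute over at most $4$ values with even multiplicities, and combine this parity/counting with the divisibility properties of Walsh coefficients in dimension $4$ to force a constant component, again contradicting the hypothesis. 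I expect a short case analysis on the admissible multiplicity profiles $(8,8)$, $(8,4,4)$, $(4,4,4,4)$, etc., each of which I would show entails a nonzero $v$ making $\langle\hat{f}_a,v\rangle$ constant.

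For part (iii), the task is to exhibit a single counterexample: a vBf permutation in some even dimension $m=2n$ with $n$ odd, satisfying $\hat{n}(f)=0$ yet failing weak APN. I would look among power functions $f(x)=x^d$, since there the derivative behaviour is controlled by the differential spectrum and Lemmas \ref{Kyu07} and \ref{Her05}. Concretely, I would take a Kasami-type or other well-studied exponent in the smallest relevant case, such as $m=6$, and verify computationally (or via the known spectrum) both that $\hat{n}(f)=0$—using Lemma \ref{Kyu07}, this amounts to $\deg(f)\neq 2$, i.e.\ $\mathrm{w}(d)\neq 2$—and that $|\Im(\hat{f}_1)|\le 2^{m-2}$ for the single representative derivative $\hat{f}_1$ (which by Lemma \ref{prop:1} determines the image size for all $a$). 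The displayed monomial $x^{11}$ on $\FF^6$ mentioned earlier in the paper is a natural first candidate to test, since $\mathrm{w}(11)=3\neq 2$ gives $\hat{n}(f)=0$ immediately, and I would check whether its differential spectrum yields $|\Im(\hat{f}_1)|\le 16$. The main obstacle overall is part (ii): unlike part (i), the hypothesis $\hat{n}(f)=0$ does not by itself close the gap, and I expect the genuine work to lie in the dimension-$4$ parity-and-divisibility argument that excludes image sizes $3$ and $4$.
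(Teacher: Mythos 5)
Your part (i) is sound and is essentially the paper's own argument in different language (the paper encodes the values of $\hat f_a$ as columns of a $3\times 8$ matrix and plays rank $3$, forced by $\hat n(f)=0$, against rank $\le 2$, forced by a two-element image; your orthogonal-of-the-affine-span argument is the same computation), although your claim that a two-element image makes $\hat f_a$ ``$4$-to-$1$'' is unjustified---multiplicities $(6,2)$ are not excluded---and your opening ``duality'' is false as an ``exactly when'' statement: $V_a=\{0\}$ does not make $\hat f_a$ injective. The genuine gap is part (ii). What you offer there---a case analysis on multiplicity profiles $(8,8)$, $(8,4,4)$, $(4,4,4,4)$ combined with ``divisibility properties of Walsh coefficients''---is a plan, not an argument, and it is aimed at the wrong invariant: the fiber sizes of $\hat f_a$ are irrelevant, only the image as a point set matters. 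The missing idea is elementary: any $k$ points of $\FF^m$ affinely span a subspace of dimension at most $k-1$. Hence if $|\Im(\hat f_a)|\le 4$ in $\FF^4$, the image lies in a proper affine subspace, and any nonzero $v$ orthogonal to its direction space makes $\langle\hat f_a,v\rangle$ constant, contradicting $\hat n(f)=0$; so $\hat n(f)=0$ forces $|\Im(\hat f_a)|\ge 5>2^{4-2}$. (This one fact proves (i) and (ii) simultaneously and explains why the statement stops at $m=4$: one needs $m+1>2^{m-2}$.) The paper itself disposes of (ii) by citing Proposition 2 of \cite{onwAPN}, so your instinct that (ii) is where the work lies is right, but your proposed route does not close it.

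For part (iii), your lead candidate is wrong: the paper's remark in Section 3 states that $x^{11}$ over $\FF^6$ \emph{is} weakly APN, so it cannot witness the failure. Your fallback is the correct---and the paper's---construction: take a Kasami exponent $d=2^{2k}-2^k+1$ with $\gcd(k,m)=2$ (for $m=6$ this is $k=2$, $d=13$; the paper takes $k=2^t$ in general), check $\gcd(d,2^m-1)=1$ so that $f$ is a permutation, and then Lemma \ref{Her05} gives that $\hat f_1$ is $4$-to-$1$, hence $|\Im(\hat f_a)|=2^{m-2}$ for all $a\ne 0$ by Lemma \ref{prop:1}, which violates the strict inequality in the weak APN definition; meanwhile $\mathrm{w}(d)>2$ means $\deg(f)\ne 2$, so Lemma \ref{Kyu07} gives $\hat n(f)=0$. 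You name exactly these ingredients but leave them at the level of ``verify computationally,'' and you restrict to a single dimension, whereas the statement (and the paper's Kasami construction) covers every $m=2n$ with $n$ odd. As written, then, neither (ii) nor (iii) is actually proved.
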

\begin{proof}
(i) Let $\FF^3=\{x_1,\dots,x_8\}$ and let $M_a$ be the matrix of dimension $3\times8$, whose columns are $m_j=\hat{f}_a(x_j)$ for $1\le j\le 8$. We claim that $\hat n (f)=0$ implies $\mathrm{rank}(M_a)=3$ for any $a$. Otherwise, we could obtain $(0,\dots,0) \in \FF^8$ from a combination of the rows of $M_a$,
and the corresponding component of $\hat{f}_a$ would be identically $0$.\\
 If $f$ is not weakly APN, we have $|\mathrm{Im}(\hat{f}_a)| \le 2$ for some $a\in\FF^3\setminus{\{0\}}$. So we have at most $2$ distinct columns, which implies $\mathrm{rank}(M_a)\le 2$ and contradicts $\mathrm{rank}(M_a)=3$.
 
 (ii) See \cite{onwAPN} Proposition $2$ .

(iii) Let $t>0$ be such that $\gcd(2^{2^{t+1}} - 2^{2^t} + 1, 2^m - 1) = 1$, and consider the power function $f(x) = x^d$, with $d = 2^{2^{t+1}} - 2^{2^t} + 1$. By hypothesis $\gcd(2^t,m) = 2$, thus, by Lemma \ref{Her05}, $f$ is $4$-differentially uniform and thus weakly $4$-differential uniform. Since in our case $d=2^{2^t}(2^{2^t}-1)+1$, then $\mathrm{w}(d)=\mathrm{w}(2^{2^t}-1)+1$ which is strictly bigger than $2$ for $t>0$. So  $f$ is not quadratic and then, by Lemma \ref{Kyu07}, $\hat{n}(f) = 0$. 
\end{proof}

In \cite{onwAPN} it was shown that a weakly APN function $f$ over $\mathbb{F}^4$ has $n_3(f)\in\{12,14,15\}$, moreover by a computer check on the class representatives the authors exclude the case $n_3(f)=12$ (Fact $4$ in \cite{onwAPN}).\\
We are now able to provide a formal proof.

\begin{proposition}[Fact $4$ in \cite{onwAPN}]
Let $f: \mathbb{F}^4 \to \mathbb{F}^4$ be a weakly APN permutation. Then $n_3(f)\in\{14,15\}$.

\end{proposition}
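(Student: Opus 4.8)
The plan is to convert the hypothesis $n_3(f)=12$ into a statement about the top-degree part of $f$, and then show that the resulting structure forces some derivative to have an image too small for weak APN-ness.

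First I would reduce the degree count to a rank condition. Since $f$ is a permutation, every component $\langle f,v\rangle$ with $v\neq 0$ is balanced, hence $\deg\langle f,v\rangle\le m-1=3$. For each $3$-subset $S\subseteq\{1,2,3,4\}$ let $c_S\in\FF^4$ collect the coefficients of the cubic monomial $x_S$ in the four coordinate functions of $f$; then the coefficient of $x_S$ in $\langle f,v\rangle$ is the scalar product $\langle v,c_S\rangle$. Hence $\deg\langle f,v\rangle\le 2$ exactly when $v\in W:=U^{\perp}$ with $U:=\mathrm{span}\{c_S:\ |S|=3\}$, so the degree-$\le 2$ directions form a subspace and $n_3(f)=2^4-2^{\dim W}$. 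In particular $n_3(f)=12$ is equivalent to $\dim W=2$. Since \cite{onwAPN} already yields $n_3(f)\in\{12,14,15\}$, i.e. $\dim W\le 2$, it remains only to rule out $\dim W=2$.

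Next I would record a clean necessary condition for weak APN-ness. For any $a\neq 0$ the set $\Im(\hat f_a)$ lies in its own linear span, so $|\Im(\hat f_a)|\le 2^{\dim\langle\Im(\hat f_a)\rangle}$; as weak APN-ness demands $|\Im(\hat f_a)|>2^{m-2}=4$, the span of $\Im(\hat f_a)$ must have dimension at least $3$. Equivalently, writing $\langle\hat f_a,v\rangle=\widehat{\langle f,v\rangle}_a$, at most a $1$-dimensional space of directions $v$ can satisfy $\widehat{\langle f,v\rangle}_a\equiv 0$. In particular, if in some basis the first two coordinates of $\hat f_a$ are both constant, then $\Im(\hat f_a)$ lies in a coset of a $2$-dimensional subspace, giving $|\Im(\hat f_a)|\le 4$ and contradicting weak APN-ness.

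Finally, the crux. Assume $\dim W=2$. By EA-invariance I may compose $f$ with an invertible linear output map and assume $W=\langle e_1,e_2\rangle$, so that $f_1=\langle f,e_1\rangle$ and $f_2=\langle f,e_2\rangle$ have degree $\le 2$ while $f_3,f_4$ have degree $3$. Being components of a permutation, $f_1$ and $f_2$ are balanced; a balanced Boolean function of degree at most $2$ on $\FF^4$ is never bent, so its associated alternating form is degenerate and it admits a nonzero linear structure, namely any nonzero element of the radical $R_i$ of that form. If $R_1\cap R_2\neq\{0\}$ — which certainly holds whenever one of $f_1,f_2$ is affine, since then its radical is all of $\FF^4$ — choosing $0\neq a\in R_1\cap R_2$ makes both $\widehat{(f_1)}_a$ and $\widehat{(f_2)}_a$ constant, whence $|\Im(\hat f_a)|\le 4$ by the previous step, a contradiction. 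The \emph{main obstacle} is the remaining configuration in which $f_1,f_2$ both have degree exactly $2$ with transverse radicals $R_1\oplus R_2=\FF^4$, where no single direction kills both derivatives simultaneously. To close this case I would exploit that $f$ is a genuine permutation: the map $(f_1,f_2)\colon\FF^4\to\FF^2$ is then $4$-to-$1$ and $(f_3,f_4)$ restricts to a bijection on each of its fibres, and a short finite analysis of the few (up to affine input changes) mutual positions of $R_1$ and $R_2$ should again exhibit a direction $a$ with $|\Im(\hat f_a)|\le 4$. I expect this radical-configuration analysis, and not any of the preceding reductions, to be the technical heart of the argument.
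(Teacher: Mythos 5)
Your reductions up to the final case are correct and essentially retrace the paper's own proof: the degree-$\le 2$ directions form a subspace $W$; weak APN-ness forces $\dim V_a\le 1$ for every $a\neq 0$ (equivalently $\hat n(f)\le 1$, which the paper cites from \cite{onwAPN}); balanced quadratic components cannot be bent, hence have nontrivial radicals; and a common nonzero linear structure of $f_1$ and $f_2$ would confine $\Im(\hat f_a)$ to a coset of a $2$-dimensional space, contradicting $|\Im(\hat f_a)|>4$. But the proposal stops exactly where the proposition is decided: in the transverse case $R_1\oplus R_2=\FF^4$ you only assert that a finite analysis of the mutual positions of $R_1$ and $R_2$, together with the $4$-to-$1$ structure of $(f_1,f_2)$, ``should'' produce a direction $a$ with $|\Im(\hat f_a)|\le 4$. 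This is not an argument, and it points in the wrong direction: the contradiction in this case is not obtained by exhibiting such a direction, and (since both radicals are $2$-dimensional and transverse) there is only \emph{one} mutual position of $R_1,R_2$ up to affine equivalence, so the case distinction you anticipate would not by itself supply anything. That is a genuine gap, and it is the crux of the statement.

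The missing idea, which is how the paper closes this case, is to use the third nonzero element of $W$. Since $\dim W=2$, the function $S_3=f_1+f_2$ is also a component of degree $\le 2$; being a nonzero component of a permutation it is balanced, hence not bent, so its set of linear structures $V(S_3)$ has dimension $2$. Now normalize $R_1=\langle e_1,e_2\rangle$ and $R_2=\langle e_3,e_4\rangle$ as you did. Constancy of the derivatives of $f_1$ along $e_1,e_2$ kills every quadratic monomial of $f_1$ containing $x_1$ or $x_2$, so the quadratic part of $f_1$ is exactly $x_3x_4$ (it cannot vanish, else $f_1$ would be affine and $R_1=\FF^4$ would meet $R_2$ nontrivially); symmetrically $f_2$ has quadratic part $x_1x_2$. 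Hence $S_3=x_1x_2+x_3x_4+(\text{affine part})$, whose derivative along $a$ equals $a_2x_1+a_1x_2+a_4x_3+a_3x_4+\mathrm{const}$, which is constant only for $a=0$. Thus $S_3$ is bent, contradicting the fact that it is balanced (equivalently, contradicting $\dim V(S_3)=2$). This single observation about the third component, absent from your plan, is what finishes the proof; no fiber analysis of $(f_1,f_2)$ is needed.
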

\begin{proof}
Let $f = (f_1, f_2, f_3, f_4)$ with $f_i : \FF^4\to\FF$, and assume by contradiction that $\deg(S) \le 2$ for three distinct linear combinations $S=\sum_{i}v_if_i$, that we call $S_1, S_2, S_3$. 

From the theory of quadratic Boolean functions (see for instance \cite{quadratic}) $\hat{S}_a$ is constant for every $a \in V(S)$ where $V(S)\subseteq \FF^4 $ is a vector subspace, called the \emph{set of linear structures} of $S$. It is well-known that $V(S)$ has  dimension $0$ if and only if $S$ is bent, it has dimension $4$ if and only if $S$ is linear (affine), and it has dimension $2$ otherwise.  
Since $V(S)$ is a vector space, $S_3=S_1+S_2$.  If there exists $a\in V(S_i)\cap V(S_j)$ different from $0$ for some $i\ne j$, then $\hat{n}(f)\ge 2$. But $f$ weakly APN implies $\hat{n}(f)\le 1$ (see \cite{onwAPN} Theorem $1$). Therefore, $V(S_i)\cap V(S_j)=\{0\}$ and $\dim(V(S_i))\le 2$ for any $i$.\\  
For any $i$, since $f$ is a permutation, then  $S_i$ is balanced, so $S_i$ is not bent, and then 
\begin{equation}\label{S3}
\dim(V(S_i))=2, \quad i=1,2,3 \,.
\end{equation}
Summarizing,  $\deg(S_i)=2$ for any $i$  and $V(S_i)\cap V(S_j)=\{0\}$ for any $i\ne j$.\\
Up to an affine transformtion, since $V(S_1)\oplus V(S_2)=\FF^4$, we can assume 
$V(S_1)=\mathrm{Span}( (1,0,0,0),(0,1,0,0))$ and $V(S_2)=\mathrm{Span}((0,0,1,0),(0,0,0,1))$. 

Let $S_1(x)=\sum_{i<j}c_{i,j}x_ix_j+\sum_i c_i x_i$. Since $S_1(x+(1,0,0,0))+S_1(x)$ is constant we have that $c_{i,j}=0$ if $i$ or $j$ equals $1$. Similarly, since $S_1(x+(0,1,0,0))+S_1(x)$ is constant we have $c_{i,j}=0$ if $i$ or $j$ equals $2$. Then $S_1(x)=x_3x_4+\sum_i c_i x_i$ and analogously we have $S_2(x)=x_1x_2+\sum_i c_i' x_i$, for some $c_i'$'s.\\
So, $S_3(x)=x_1x_2+x_3x_4+\sum_i b_i x_i$, $b_i=c_i+c_i'$, and we can compute the derivative of $S_3$ with respect to $a=(a_1,a_2,a_3,a_4)\in\FF^4\setminus\{0\}$ as
$$
\hat{(S_3)}_a(x)=a_2x_1+a_1x_2+a_4x_3+a_3x_4+c, \mbox{ where $c$ is constant.}
$$
\noindent
Hence $\hat{(S_3)}_a(x)$ is constant if and only if $a=0$, so $S_3$ is bent and  $\dim(V(S_3))=0$, contradicting \eqref{S3}. 
\end{proof}

\section{Quadratic and partially bent functions}\label{sec:part}

\begin{theorem}\label{quadraticweakly}
A quadratic function is APN if and only it is weakly APN.
\end{theorem}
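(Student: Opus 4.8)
The plan is to dispose of the easy implication first and then concentrate all the work on the converse. The forward direction, APN $\Rightarrow$ weakly APN, is the special case $\gd=2$ of the fact recalled just after Definition \ref{def:weakly} that every $\gd$-differentially uniform map is weakly $\gd$-differential uniform; it uses nothing about the degree. So the whole content lies in showing that a \emph{quadratic} weakly APN function is in fact APN.

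The key structural observation is that quadraticity makes the derivatives affine. If $\deg(f)\le 2$, then for each fixed $a\neq 0$ every coordinate of $\hat{f}_a(x)=f(x+a)+f(x)$ has degree at most $1$, so $\hat{f}_a$ is an affine map, say $\hat{f}_a(x)=L_a(x)+c_a$ with $L_a$ an $\FF$-linear map and $c_a\in\FF^m$. For such a map the image is the coset $\Im(L_a)+c_a$, hence $|\Im(\hat{f}_a)|=2^{r}$ with $r=\mathrm{rank}(L_a)$, and every nonempty fibre is a coset of $\Ker(L_a)$, so $\hat{f}_a$ is exactly $2^{m-r}$-to-$1$ onto its image.

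I would then pin down $r$ by two opposite inequalities. On one side, in characteristic $2$ one has $\hat{f}_a(x+a)=f(x+2a)+f(x+a)=f(x+a)+f(x)=\hat{f}_a(x)$, so every fibre of $\hat{f}_a$ is stable under $x\mapsto x+a$ and therefore has even size; thus $\hat{f}_a$ is at least $2$-to-$1$ and $r\le m-1$. On the other side, weak APNness gives $|\Im(\hat{f}_a)|=2^{r}>2^{m-2}$, i.e. $r\ge m-1$. Hence $r=m-1$ for every $a\neq 0$, so $\hat{f}_a$ is exactly $2$-to-$1$, which means $\gd_f(a,b)\in\{0,2\}$ for all $b$ and all $a\neq 0$. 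Therefore $\gd(f)=2$ and $f$ is APN.

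I do not expect a genuine obstacle here: once the derivatives are recognized as affine, the argument is pure linear algebra over $\FF$, and the only point requiring a moment's care is the ``at least $2$-to-$1$'' fact, which is exactly the characteristic-$2$ identity $\hat{f}_a(x+a)=\hat{f}_a(x)$. This rigidity of quadratic functions is the same phenomenon exploited in Section \ref{sec:part}, and it forces the weak threshold $2^{m-2}$ to collapse onto the APN bound.
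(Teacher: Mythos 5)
Your proposal is correct and takes essentially the same route as the paper's proof: quadraticity makes each $\hat{f}_a$ affine, so $\Im(\hat{f}_a)$ is an affine subspace whose size is a power of $2$, and weak APNness forces that size to be $2^{m-1}$, i.e.\ $\hat{f}_a$ is $2$-to-$1$ for every $a\neq 0$. If anything, your version is slightly more complete, since you explicitly rule out the full-image possibility $|\Im(\hat{f}_a)|=2^m$ via the identity $\hat{f}_a(x+a)=\hat{f}_a(x)$, a point the paper's proof leaves implicit.
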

\begin{proof}
Let $f$ be weakly APN and $a\ne 0$ arbitrary. Then by definition, $|\Im(\hat{f}_{a})|>2^{m-2}$. 
Since $f$ is quadratic, $\hat{f}_{a}$ is affine. Then $\Im(\hat{f}_{a})$ is an affine
subspace. Hence its size is a power of $2$, the only possibility of being equal $2^{m-1}$. So $|\Im(\hat{f}_{a})|=2^{m-1}$
 for any non-zero $a$, which means that $\hat{f}_{a}$ is
$2$-to-$1$ for all non-zero $a$.
\end{proof}

As was shown in \cite{SZZ} there is no APN quadratic permutation over $\FF^m$ for $m$ even, and so, by previous theorem, there is no weakly APN quadratic permutation over $\FF^m$ for $m$ even. This result was extended by Nyberg \cite{Ny} to the case of  permutations with all components partially bent  (for $m$ even), in other words there is no APN partially bent permutation. We are able to extend these results by relaxing the condition \emph{APN permutations} with the condition \emph{weakly APN permutation}.

\begin{definition}[\cite{Ca}]
A Boolean function $f$ is partially bent if there exists a linear subspace $V(f)$ of $\FF^m$ such that the restriction of $f$ to $V(f)$ is affine and the restriction of $f$ to any complementary subspace $U$ of $V(f)$, $V(f)\oplus U =\FF^m$, is bent. In that case, $f$ can be represented as a direct sum of the restricted functions, i.e., $f(y + z) = f(y) + f(z)$, for all $z\in V(f)$ and $y\in U$.
\end{definition}

\begin{remark}\label{rk:U}
The space $V(f)$ is formed by the linear structures of $f$, in fact
$$
f(x+a)+f(x)=f(y+z+a)+f(y+z)=f(y)+f(z)+f(a)+f(y)+f(z)=f(a)
$$
where $z,a\in V(f)$ and $y\in U$.
Moreover, since  bent function exist only in even dimension, $m-\dim(V(f))$ is even. That means if $m$ is even, the dimension of $V(f)$ is even. 
\end{remark}

\begin{theorem}\label{th:pb}
For $m$ even, a weakly APN permutation has at most $\frac{2^m-1}{3}$ partially bent components. In particular $f$ cannot have all partially bent components.
\end{theorem}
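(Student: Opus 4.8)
The plan is to count the partially bent components by a disjointness argument on their spaces of linear structures, using the weakly APN hypothesis in the form $\hat{n}(f)\le 1$ and the parity forced by $m$ being even.

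First I would record two facts about a partially bent component $S_v:=\langle f,v\rangle$ (with $v\neq 0$). Since $f$ is a permutation, every such $S_v$ is balanced, and a balanced Boolean function cannot be bent; hence the space of linear structures $V(S_v)$ is nonzero. By Remark \ref{rk:U}, since $m$ is even the integer $\dim(V(S_v))$ is even, so in fact $\dim(V(S_v))\ge 2$, giving $|V(S_v)\setminus\{0\}|\ge 3$. It is precisely this step that makes essential use of the even-dimension hypothesis: with only $\dim(V(S_v))\ge 1$ one would get the vacuous bound $N\le 2^m-1$, so the parity argument that upgrades $1$ to $2$ (and hence $1$ to $3$ in cardinality) is the crux.

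Next I would translate the weakly APN hypothesis into pairwise disjointness. Since $\langle\hat{f}_a,v\rangle=\widehat{\langle f,v\rangle}_a$, the condition $\deg(\langle\hat{f}_a,v\rangle)=0$ is equivalent to $a$ being a linear structure of $\langle f,v\rangle$, i.e. $a\in V(\langle f,v\rangle)$. By Theorem~1 of \cite{onwAPN} a weakly APN function satisfies $\hat{n}(f)\le 1$, which means that for every nonzero $a$ there is at most one nonzero $v$ with $a\in V(\langle f,v\rangle)$. Equivalently, the sets $V(S_v)\setminus\{0\}$, taken over all nonzero $v$, are pairwise disjoint subsets of $\FF^m\setminus\{0\}$.

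Finally I would combine the two observations. Let $N$ denote the number of partially bent components. Their associated sets $V(S_v)\setminus\{0\}$ are pairwise disjoint and each has cardinality at least $3$, so $3N\le\sum_v|V(S_v)\setminus\{0\}|\le|\FF^m\setminus\{0\}|=2^m-1$, whence $N\le\frac{2^m-1}{3}$. Since the total number of nonzero components is $2^m-1>\frac{2^m-1}{3}$, not all components can be partially bent, which is the final assertion. The argument is otherwise routine; the only point demanding care is justifying $\dim(V(S_v))\ge 2$ via balancedness (non-bentness) together with the even parity of $\dim(V(S_v))$.
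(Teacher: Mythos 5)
Your proof is correct and takes essentially the same approach as the paper's: balancedness (from $f$ being a permutation) rules out bent components, the parity observation of Remark \ref{rk:U} upgrades $\dim(V(S_v))\ge 1$ to $\dim(V(S_v))\ge 2$ so each punctured space of linear structures has at least $3$ elements, and $\hat{n}(f)\le 1$ (Theorem 1 of \cite{onwAPN}) makes these punctured spaces pairwise disjoint inside $\FF^m\setminus\{0\}$, giving the bound. The only cosmetic difference is that you run the count directly while the paper phrases it as a pigeonhole contradiction.
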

\begin{proof}
Let $f$ be a weakly APN permutation. Assume by contradiction that $f$ has more than $\frac{2^m-1}{3}$ partially bent components, and denote those with $f_1,\dots,f_s$.
$f$ is a permutation, then $\dim(V(f_i))\neq 0$ for all $1\le i\le s$, otherwise $f_i$ is bent and it is not balanced.
From Remark \ref{rk:U} we have that there exist at least three nonzero vectors in each $V(f_i)$. So
$$
\sum_{i=1}^s |V(f_i)|\ge 3\,s>2^m-1.
$$
Thus, there exist $i$ and $j$ such that $a\in V(f_i)\cap V(f_j)$ with $a\neq 0$. This implies $\hat{n}(f)\ge 2$, which contradicts that $f$ is weakly APN, since in that case $\hat n(f)\le 1$ (\cite{onwAPN} Theorem $1$).
\end{proof}

From the fact that a quadratic Boolean function is partially bent (see for instance \cite{Ny}), we have immediately the following result.

\begin{proposition}

Let $m$ even. Let $f$ be a weakly APN permutation. Then $f$ has at most $2^{m-2}-1$ quadratic components.
\end{proposition}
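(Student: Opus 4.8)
The plan is to exploit a linearity property that quadratic components enjoy but general partially bent components do not: the degree-$\le 2$ components are closed under addition. Concretely, since $\langle f, v+v'\rangle = \langle f,v\rangle + \langle f,v'\rangle$ and $\deg(\langle f,v\rangle + \langle f,v'\rangle) \le \max\{\deg\langle f,v\rangle, \deg\langle f,v'\rangle\}$, the set
$$
Q = \{v\in\FF^m \mid \deg\langle f,v\rangle \le 2\}
$$
is an $\FF$-subspace of $\FF^m$. Writing $|Q| = 2^q$, the number of nonzero $v$ with $\langle f,v\rangle$ of degree $\le 2$ is exactly $2^q - 1$; in particular the number of quadratic components of $f$ is at most $2^q - 1$.

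First I would observe that every component indexed by a nonzero $v \in Q$ is partially bent: a Boolean function of degree $\le 2$ is partially bent, its radical being the subspace $V(\langle f,v\rangle)$ of its linear structures. Hence $f$ has at least $2^q - 1$ partially bent components. Applying Theorem \ref{th:pb} (valid since $m$ is even) then yields
$$
2^q - 1 \le \frac{2^m-1}{3}.
$$

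The decisive step is then purely arithmetic: I would show this inequality forces $q \le m-2$. Indeed it rearranges to $3\cdot 2^q \le 2^m + 2$, and for $q = m-1$ the left side equals $\frac{3}{2} 2^m$, which exceeds $2^m+2$ as soon as $2^m > 4$; so any $q \ge m-1$ is impossible for even $m \ge 4$. Consequently $2^q \le 2^{m-2}$, and the number of quadratic components of $f$, being at most $2^q - 1$, is at most $2^{m-2}-1$.

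The only genuine subtlety, and the point I would treat with care, is the passage from \emph{partially bent} to the sharper $2^{m-2}-1$: Theorem \ref{th:pb} alone gives only $\frac{2^m-1}{3}$, which exceeds $2^{m-2}-1$. The improvement comes entirely from the subspace structure of $Q$, which restricts the admissible counts to the values $2^q - 1$; combined with the gap between $2^{m-2}-1$ and $2^{m-1}-1$ this pins down $q$. I would also note in passing that affine (degree $\le 1$) components lie in $Q$ and are themselves partially bent, with full radical $\FF^m$, so they are correctly covered by Theorem \ref{th:pb} and need no separate treatment.
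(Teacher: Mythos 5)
Your proof is correct and is exactly the argument the paper intends: its one-line proof rests on the same two ingredients you use --- quadratic (degree $\le 2$) components are partially bent, and the degree-$\le 2$ components form a vector space $Q$ --- so that Theorem \ref{th:pb} gives $2^q-1\le\frac{2^m-1}{3}$, which forces $q\le m-2$; your write-up just makes this arithmetic and the treatment of affine components explicit. The only case your inequality leaves aside, $m=2$, is vacuous anyway, since every permutation of $\FF^2$ is affine and hence has constant derivatives, so it cannot be weakly APN.
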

\begin{proof}
That depends on the fact that the set of components of degree less or equal to $2$ is a vector space.
\end{proof}

\section{Properties linked to $\hat{n}(f)$}
In this last part of the paper we give some properties linked to the value of $\hat{n}(f)$ of a vBf. For all $a\in \FF^m\setminus\{0\}$, let $V_a$ be the vector space $\{v\,\in\FF^m\,:\,\deg(\langle\hat{f}_a,v\rangle)=0\}$. By definition, if $t=\max_{a\in\FF^m\setminus\{0\}}\dim(V_a)$, then $\hat{n}(f)=2^t-1$.
\begin{proposition}
Let $f$ be a vBf and $a\in \FF^m\setminus\{0\}$. $f(a)+V_a^{\perp}$ is the smallest affine subspace of $\FF^m$ containing $\Im(\hat{f}_a)$. In particular, $\hat{n}(f)=0$ if and only if there does not exist a proper affine subspace of $\FF^m$ containing $\Im(\hat{f}_a)$, for all $a\in \FF^m\setminus\{0\}$.
\end{proposition}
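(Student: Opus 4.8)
The plan is to exploit a duality between the space $V_a$ and the affine hyperplanes containing $\Im(\hat f_a)$. The starting observation is that, reading the defining condition $\deg(\langle\hat f_a,v\rangle)=0$ as ``$\langle\hat f_a,v\rangle$ is a constant function'' (so that $0\in V_a$ and $V_a$ is genuinely a subspace), we have for $v\neq 0$ that $v\in V_a$ precisely when the linear functional $y\mapsto v\cdot y$ is constant on $\Im(\hat f_a)$, i.e. when the affine hyperplane $H_v=\{y\in\FF^m : v\cdot y=v\cdot f(a)\}$ contains $\Im(\hat f_a)$. Here I use $f(0)=0$, so that $\hat f_a(0)=f(a)\in\Im(\hat f_a)$ pins the constant value of $\langle\hat f_a,v\rangle$ to be $v\cdot f(a)$; I write $v\cdot y=\sum_i v_iy_i$ for the standard inner product, with respect to which $V_a^\perp$ is taken.

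For the first assertion I would invoke the standard fact that the smallest affine subspace containing a nonempty set $S$ (its affine hull) is the intersection of all affine hyperplanes containing $S$. Applying this to $S=\Im(\hat f_a)$ and using the previous paragraph, the containing hyperplanes are exactly the $H_v$ with $0\neq v\in V_a$, whence the affine hull equals $\bigcap_{0\neq v\in V_a}H_v=\{y : v\cdot(y-f(a))=0 \ \forall v\in V_a\}=f(a)+V_a^\perp$. When $V_a=\{0\}$ there are no such hyperplanes and both sides degenerate to $\FF^m=f(a)+V_a^\perp$, so the formula persists. An equivalent route that avoids the hyperplane-intersection lemma is to write the affine hull as $f(a)+U$ with $U=\langle\,\hat f_a(x)+f(a) : x\in\FF^m\,\rangle$ and to check directly that $v\in U^\perp$ iff $v\cdot\hat f_a(x)=v\cdot f(a)$ for all $x$ iff $v\in V_a$; thus $U^\perp=V_a$ and hence $U=V_a^\perp$ by finite-dimensional double orthogonality.

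The second assertion then follows formally. By the first part, a proper affine subspace containing $\Im(\hat f_a)$ exists if and only if the smallest such subspace $f(a)+V_a^\perp$ is itself proper, i.e. $V_a^\perp\neq\FF^m$, i.e. $V_a\neq\{0\}$. Hence ``no proper affine subspace contains $\Im(\hat f_a)$'' is equivalent to $V_a=\{0\}$, that is $\dim(V_a)=0$. Demanding this for every $a\in\FF^m\setminus\{0\}$ means $t=\max_{a}\dim(V_a)=0$, which by the relation $\hat n(f)=2^t-1$ recalled at the start of the section is exactly $\hat n(f)=0$.

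The computations here are light; the only points needing care are the correct reading of the defining condition of $V_a$ (constant, hence $0\in V_a$ and $V_a$ a subspace), the identification of the constant value of $\langle\hat f_a,v\rangle$ as $v\cdot f(a)$ via $f(0)=0$, and the justification of the affine-hull/hyperplane duality together with its degenerate case $V_a=\{0\}$. I expect the main obstacle to be purely presentational: phrasing the affine-hull characterization so that the nondegenerate and degenerate cases are both subsumed by the single formula $f(a)+V_a^\perp$.
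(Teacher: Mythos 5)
Your proof is correct, and at bottom it runs on the same duality as the paper's: $v\in V_a$ exactly when the component $\langle\hat{f}_a,v\rangle$ is constant equal to $\langle f(a),v\rangle$ (pinned down via $f(0)=0$), i.e.\ exactly when $\Im(\hat{f}_a)$ lies in the corresponding affine hyperplane, together with double orthogonality of subspaces of $\FF^m$. The difference is one of packaging. The paper argues the two inclusions directly: it shows $\Im(\hat{f}_a)\subseteq f(a)+V_a^{\perp}$ by writing $\hat{f}_a(x)=f(a)+w$ and checking $w\in V_a^{\perp}$, and for minimality it writes an arbitrary affine subspace $A\supseteq\Im(\hat{f}_a)$ as $A=f(a)+V$, shows $V^{\perp}\subseteq V_a$, and concludes $A\supseteq f(a)+V_a^{\perp}$. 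Your primary route instead quotes the general lemma that the affine hull of a set is the intersection of the affine hyperplanes containing it, and then computes that intersection; this is more modular and makes the duality conceptually explicit, but over $\FF^m$ the lemma itself is proved by exactly the double-orthogonality computation the paper performs inline, so nothing is gained in rigor or generality, and you must treat the degenerate case $V_a=\{0\}$ separately, which the paper's direct argument absorbs automatically. Your ``alternative route'' --- affine hull $=f(a)+U$ with $U$ the span of $\{\hat{f}_a(x)+f(a) : x\in\FF^m\}$ and $U^{\perp}=V_a$ --- is essentially the paper's minimality step reorganized. The deduction of the equivalence with $\hat{n}(f)=0$ is identical in both.
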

\begin{proof}
Let $a\in\FF^m\setminus\{0\}$. Note that $V_a=\{v\,\in\FF^m\,:\,\langle\hat{f}_a,v\rangle \mbox{ is constant} \}$. Let $x\in \FF^m$, then $\hat f_a(x)=f(a)+w$, for some $w\in \FF^m$, and $\langle\hat{f}_a(x),v\rangle=c\in\FF$ for all $v \in V_a$. In particular $c=\langle\hat{f}_a(0),v\rangle\,=\,\langle f(a),v\rangle$ and so $\langle w,v\rangle=0$, that is, $w\in V_a^\perp$. Then we have $\Im(\hat f_a)\subseteq f(a)+V_a^\perp$. Now, let $A$ be an affine subspace containing $\Im(\hat{f}_a)$, then $A=f(a)+V$, for some vector subspace $V$ in $\FF^m$. For all $v\in V^{\perp}$, we have $\langle\hat{f}_a,v\rangle=\langle f(a),v\rangle=c\in\FF$ and so, by definition, $V^{\perp}\subseteq V_a$. Then $A$ contains $f(a)+V_a^{\perp}$.

Finally,  $\hat{n}(f)=0$ if and only if $V_a=\{0\}$ for all $a\in\FF^m\setminus\{0\}$, and so our claim follows.
\end{proof}

\begin{remark}
The proposition above gives a sufficient condition, \emph{i.e.} $\hat{n}(f)=0$, such that the derivates do not map the message space to an affine subspace; and so a type of trapdoors can be avoided, as noted in Section \ref{sec:1}. 
\end{remark}

The following proposition is well-known, but we propose a proof in our context.

\begin{proposition}\label{prop:15}
Let $f: \mathbb{F}^m \to \mathbb{F}^m$ be a Boolean permutation such that $\hat{n}(f)=0$. Then $f$ has no partially bent (quadratic) components.
\end{proposition}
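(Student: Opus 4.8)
The plan is to argue by contradiction, reducing the existence of a partially bent component to a violation of $\hat{n}(f)=0$ through the observation that any such component must carry a nonzero linear structure. So I would suppose that $f$ has a partially bent component $g=\langle f,v\rangle$ for some $v\in\FF^m\setminus\{0\}$, and denote by $V(g)$ its associated partial-bent subspace.

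The first step is to rule out $V(g)=\{0\}$. Here I would invoke the standard fact that every component of a permutation is balanced: as $x$ ranges over $\FF^m$, $f(x)$ ranges over all of $\FF^m$, so $\langle f(x),v\rangle$ takes the value $0$ exactly on the hyperplane $v^{\perp}$ (of size $2^{m-1}$) and the value $1$ on its complement. Since a bent function is never balanced, $g$ cannot be bent. But if $V(g)=\{0\}$, then the definition of partially bent, with complementary subspace $U=\FF^m$, would force $g$ to be bent. Hence $V(g)$ must contain some nonzero vector $a$. Note that this argument is uniform in the parity of $m$: in odd dimension bent functions do not exist at all, while in even dimension they exist but fail to be balanced, so in either case $V(g)\neq\{0\}$.

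The second step translates this back into the derivative picture used to define $\hat{n}(f)$. By Remark \ref{rk:U}, $V(g)$ is exactly the set of linear structures of $g$, so for the nonzero $a\in V(g)$ the derivative $\hat{g}_a(x)=g(x+a)+g(x)=\langle f(x+a)+f(x),v\rangle=\langle\hat{f}_a(x),v\rangle$ is constant. Thus $\deg(\langle\hat{f}_a,v\rangle)=0$ with $v\neq 0$, that is $v\in V_a\setminus\{0\}$, whence $\dim(V_a)\ge 1$ and $\hat{n}(f)=2^{\max_a\dim(V_a)}-1\ge 1$, contradicting $\hat{n}(f)=0$. Since a quadratic Boolean function is partially bent, the quadratic case is automatically subsumed and the proposition follows.

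The main obstacle is the careful bookkeeping that links the component-level linear-structure space $V(g)$ to the derivative-level sets $V_a$ occurring in the definition of $\hat{n}(f)$, together with the justification that $V(g)\neq\{0\}$; the latter rests entirely on the balancedness of permutation components and the non-balancedness of bent functions, which is precisely what makes the reduction go through in a single stroke for both parities of $m$.
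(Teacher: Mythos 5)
Your proof is correct and takes essentially the same route as the paper's: both arguments rest on the equivalence between $\hat{n}(f)=0$ and the triviality of the linear-structure space of every component, the observation that a partially bent component with $V(g)=\{0\}$ must be bent, and the fact that components of a permutation are balanced while bent functions are not. You merely run the argument as an explicit contradiction and spell out the identification $\hat{g}_a=\langle\hat{f}_a,v\rangle$ linking $V(g)$ to the sets $V_a$, details the paper leaves implicit.
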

\begin{proof}
$\hat{n}(f)=0$ implies that the linear structures set of any component contains only $0$. So if there exists a partially bent (quadratic) component, then it is bent. But $f$ is a permutation, then this is not possible.
\end{proof}

For the particular case of $4$-bit S-Boxes we obtain two more results.

\begin{corollary}
Let $f: \mathbb{F}^4 \to \mathbb{F}^4$ be a vBf permutation.\\
(i) If $\hat{n}(f)=0$. Then  $f$ is weakly APN and $n_3(f)=15$.\\
(ii) If $f$ is weakly APN and $n_3(f)=14$. Then $\hat{n}(f)=1$.

\end{corollary}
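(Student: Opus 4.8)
The plan is to prove (i) first and then obtain (ii) from it by contraposition, since a weakly APN permutation on $\FF^4$ has $n_3(f)\in\{14,15\}$ (established in the Proposition just above), and these two values are mutually exclusive.

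For part (i), the weak APNness is immediate from Theorem \ref{th:wapn}(ii), so the real content is the identity $n_3(f)=15$. First I would record that, since $f$ is a permutation of $\FF^4$, each nonzero component $\langle f,v\rangle$ is balanced and hence non-constant, while the algebraic degree of a permutation is at most $m-1=3$; therefore every component has degree in $\{1,2,3\}$ and $n_1(f)+n_2(f)+n_3(f)=15$. The goal is then to show $n_1(f)=n_2(f)=0$ under the hypothesis $\hat{n}(f)=0$.

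To rule out affine (degree one) components, I would observe that if $\langle f,v\rangle$ were affine, say $\langle f,v\rangle(x)=\langle u,x\rangle+c$, then $\langle\hat{f}_a,v\rangle(x)=\langle u,a\rangle$ is constant for every $a$, so $v\in V_a$ for all $a\neq 0$; as $v\neq 0$ this forces $\dim(V_a)\ge 1$ and hence $\hat{n}(f)=2^t-1\ge 1$, contradicting $\hat{n}(f)=0$. Thus $n_1(f)=0$. To rule out quadratic components I would invoke Proposition \ref{prop:15}: a quadratic Boolean function is partially bent, and $\hat{n}(f)=0$ leaves no room for partially bent components, so $n_2(f)=0$. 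Combining the two exclusions, all $15$ nonzero components are cubic, i.e. $n_3(f)=15$.

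For part (ii), weak APNness gives $\hat{n}(f)\le 1$ (Theorem $1$ of \cite{onwAPN}); since $\hat{n}(f)=2^t-1$ with $t=\max_{a}\dim(V_a)$, the only admissible values are $\hat{n}(f)\in\{0,1\}$. If $\hat{n}(f)=0$, then part (i) yields $n_3(f)=15$, contradicting the hypothesis $n_3(f)=14$. Hence $\hat{n}(f)=1$. The only delicate step is the exclusion of affine and quadratic components in (i); everything else is bookkeeping once Proposition \ref{prop:15}, the degree bound $\deg(f)\le m-1$ for permutations, and the relation $\hat{n}(f)=2^t-1$ are in hand.
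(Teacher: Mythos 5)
Your proof is correct and takes essentially the same route as the paper's (very terse) proof: weak APNness from Theorem \ref{th:wapn}(ii), exclusion of quadratic components via Proposition \ref{prop:15}, the bound $\hat{n}(f)\le 1$ for weakly APN permutations from \cite{onwAPN}, and contraposition for part (ii). The details you make explicit --- the degree bound $\deg(\langle f,v\rangle)\le 3$ for permutation components and the direct linear-structure argument ruling out affine components --- are precisely what the paper's two-line proof leaves implicit.
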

\begin{proof}
Let $f$ be weakly APN, so $\hat{n}(f)\leq 1$ (see \cite{onwAPN}). From Proposition \ref{prop:15}, the claim follows.
\end{proof}

So for weakly APN function $f:\FF^4\to \FF^4$ we have all the three cases. Below we provide an example for each case reporting the algebraic normal form of the components of $f$: 
\begin{itemize}

\item $\hat{n}(f)=0$ and $n_3(f)=15$:
$$
\begin{aligned}[l]
&f_1=x_1x_2x_3+ x_2x_3x_4 + x_1x_3  + x_2x_3+ x_1  + x_2 + x_3 +  x_4\\
&f_2=x_1x_2x_4 + x_1x_2 + x_1x_3 + x_2x_3 + x_2x_4 + x_4\\
&f_3=x_1x_3x_4 +x_1x_2 +  x_1x_3 + x_1x_4 + x_3 + x_4\\
&f_4=x_2x_3x_4 +x_1x_4 +  x_2x_4 + x_2 + x_3x_4 + x_3 + x_4
\end{aligned}
$$
\item $\hat{n}(f)=1$ and $n_3(f)=15$:
$$
\begin{aligned}[l]
&f_1=x_1x_3x_4  + x_2x_3x_4 + x_2x_3 + x_2x_4 + x_3x_4+ x_1\\
&f_2=x_1x_2x_4 + x_1x_3 + x_1x_4 + x_2x_3 + x_2\\
&f_3=x_1x_2x_3 + x_1x_2x_4+ x_1x_3x_4 + x_2x_3x_4  + x_1x_2 +x_3x_4 + x_3\\
&f_4=x_2x_3x_4 +x_1x_2 + x_1x_4 +  x_2x_3 + x_4
\end{aligned}
$$
\item $\hat{n}(f)=1$ and $n_3(f)=14$:
$$
\begin{aligned}[l]
&f_1=x_1x_2x_3 + x_1x_2x_4 + x_1x_3 + x_1 + x_2x_3x_4 + x_2x_3 + x_3x_4\\
&f_2=x_1x_2x_4 + x_1x_2 + x_1x_3x_4 + x_1x_3 + x_1x_4 + x_2\\
&f_3=x_1x_2x_4 + x_1x_2 + x_1x_3x_4 + x_1x_3 + x_2x_4 + x_3\\
&f_4=x_1x_3 + x_1x_4 + x_2x_3x_4 + x_2x_4 + x_4
\end{aligned}
$$
\end{itemize}

\section{Conclusions}
As reported in Section \ref{sec:0} and \ref{sec:1}, weak differential uniformity and the cryptographic condition that the image of the derivatives of an S-Box is never a coset of a subspace of the message space are useful to prevent  hiding certain type of trapdoors in the related cipher. 

First we study the algebraic structure of the image of the derivatives of a vectorial Boolean function. In particular we prove that for any vBf that is weakly APN but not APN, there is at least one derivative whose image is not an affine subspace (Theorem \ref{weaknotAPNcoset}). In the case of power functions, to be weakly APN but not APN guarantees that none of the image of the derivatives is an affine subspace (Corollary \ref{cor:monomial}). An interesting open problem is to find families of  vBf that are not monomial  having this property.

Then we show that the sufficient condition $\hat{n}(f)=0$, that ensures weakly APNness for the 4-bit vBf's, does not guarantee this property for $m$-bit vBf's with $m>4$ (Theorem \ref{th:wapn}). It would be interesting to find sufficient conditions that imply weakly APNness for any $m>4$.

In Section  \ref{sec:part} we extend some results known for the (quadratic) partially bent components of an APN permutation to the case of weakly APN permutations.

In the last section we report some other results linked to the value of $\hat{n}(f)$, in particular we prove that this value is zero if and only if the derivates of $f$ do not map the message space to an affine subspace.

\section*{Acknowledgements}
We are grateful to the unknown referees for their suggestions, which were decisive in order to improve and clarify the exposition. In particular we would like to thank one of the referees for the  Theorem \ref{quadraticweakly} and its proof.

\end{document}